\begin{document}

\title{An Information Theory Approach to Network Evolution Models}

\shorttitle{An Information Theory Approach to Network Evolution Models} 
\shortauthorlist{A. Farzaneh and J. P. Coon} 

\author{
\name{Amirmohammad Farzaneh}
\address{Department of Engineering Science, University of Oxford, Oxford, UK}
\and
\name{Justin P. Coon$^*$}
\address{Department of Engineering Science, University of Oxford, Oxford, UK\email{$^*$Corresponding author: justin.coon@eng.ox.ac.uk}}}

\maketitle

\begin{abstract}
{A novel Markovian network evolution model is introduced and analysed by means of information theory. It will be proved that the model, called Network Evolution Chain, is a stationary and ergodic stochastic process. Therefore, the Asymptotic Equipartition Property can be applied to it. The model's entropy rate and typical sequences are also explored. Extracting particular information from the network and methods to simulate network evolution in the continuous time domain are discussed. Additionally, the Erdős–Rényi Network Evolution Chain is introduced as a subset of our model with the additional property of its stationary distribution matching the Erdős–Rényi random graph model. The stationary distributions of nodes and graphs are calculated for this subset alongside its entropy rate. The simulation results at the end of the paper back up the proved theorems and calculated values.}
{Network Evolution, Information Theory, Markov Chain, Stochastic Process, Entropy Rate}
\end{abstract}

\section{Introduction}
Graphs have often been used to model networks in different fields of science. Many random graph models have been introduced throughout the years to model different types of networks. Two of the oldest and the most important ones were introduced back in 1959 \cite{erdHos1960evolution}, \cite{gilbert1959}. Numerous works demonstrate the ability of such models to capture and simulate the properties of different networks, such as social networks \cite{ROBINS2007173}, electric networks \cite{holmgren2006using}, and even biological networks \cite{10.1093/bioinformatics/btm370}. However, we must take into account that networks are dynamic entities. Networks change and evolve over time, and a network model that captures the dynamics of the network is needed for better analysis of the network. The evolution of networks has been greatly studied throughout the years, and this has resulted in models that capture the growth of networks, such as the Barabási–Albert model \cite{dorogovtsev2002evolution}. There have also been models introduced as  Network Evolution Models, which simulate networks and their evolution over time \cite{doi:10.1080/0022250X.1996.9990179}, \cite{TOIVONEN2009240}. These models have proven to be very useful in simulating the behaviour of evolving networks, and have been able to answer many important questions in network science. However, our primary goal in this paper is to provide quantitative answers to questions such as how much we expect the network to evolve over time or how compressible the network is. This can be used in applications in which we need to predict the future states of the network, or transmit the states over a communication channel in an efficient manner. Moreover, most of the existing models only consider the growth of networks and do not consider the scenarios in which the network shrinks. The shrinkage of networks is observable in many scenarios, such as the departure of people from online social networks. In this paper, we introduce a Network Evolution Model that can be examined using an information theory point of view, and is powerful in capturing the dynamics of real-life networks. We believe that this model can easily simulate the dynamics of data networks, social networks, biological networks and any other form of network that can be represented by a graph.

The problem of finding the more probable sequences of a particular random variable is a fundamental principle of data compression in information theory. Typical sequences are defined as a result of the Asymptotic Equipartition Property (AEP), and are proved to essentially represent sequences that have a very high probability of occurrence. The concept of typical sequences motivates us to search for a similar approach in relation to graphs. There are many application scenarios in which the topology of the network needs to be transmitted and we need to compress it as much as possible. In many of these scenarios, we have the current state of the network and want to predict or compress the future states. This motivates us to define a network evolution model that both simulates the network well and can be analysed using information theory. Additionally, analysing the dynamics of graphs by means of information theory will result in calculating a value for the entropy rate of the process. \cite{zhao2011entropy} uses a similar approach, and is very successful in calculating the entropy rate of nonequilibrium growing networks, and then analysing them. However, there are differences between the interpretation of entropy in our model and the one introduced by \cite{zhao2011entropy}, which is further discussed in section \ref{entropy rate and typical}.

After stating the Shannon-McMillan-Breiman theorem, also known as the general AEP, we will define a model for building a sequence of graphs for which typical sequences can be defined. This model beautifully captures the essence in which networks evolve over time. Consequently, a typical sequence in this model can be interpreted as a likely journey that a network will go through. This new model will then be able to help us identify the most probable states of a dynamic network. This model will be called an NEC, short for Network Evolution Chain. In many cases, it is not the network itself we are interested in, but rather certain properties of the network. Therefore, following the definition of an NEC, Network Property Chains are introduced as a method to only extract information about the desired properties of the network. Subsequently, we will provide two methods to expand the definition of Network Evolution Chains into the continuous time domain. Afterwards, a subset of NEC will be presented as Erdős–Rényi Network Evolution Chains (ERNEC). As the Erdős–Rényi model is commonly used for simulating graphs and networks, we are also interested in having Network Evolution Chains whose stationary probability distribution matches that of the random graphs generated by this model. It will be proved that ERNECs satisfy our desired condition. Furthermore, the entropy rate and stationary distribution of an ERNEC is calculated. Finally, simulation results are presented to back up the theories that we prove about these chains.

\section{Basics and Definition}

In this section, we introduce a stochastic process that manages to capture and model the way networks change over time. In information theory, the Asymptotic Equipartition Property is the basis of defining typical sequences and compressing random variables. However, the sequence needs to fulfil certain conditions for the AEP to be applied to it. In order for our model to be as inclusive as possible, we use a more general version of the AEP which is known as the Shannon-McMillan-Breiman theorem.

\begin{theorem}[AEP: Shannon-McMillan-Breiman Theorem]\cite[Thm.~16.8.1]{10.5555/1146355}
\label{General AEP}
If \(H\) is the entropy rate of a finite-valued stationary ergodic process \(\{X_i\}\), then
\[ -\frac{1}{n}\log p(X_1,X_2,...,X_{n})\rightarrow H\quad with~probability~1.\]
\end{theorem}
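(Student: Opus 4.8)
The plan is to establish the almost-sure convergence by sandwiching the quantity $-\frac{1}{n}\log p(X_1,\dots,X_n)$ between two sequences sharing the common limit $H$, in the spirit of the Algoet--Cover argument. Writing $X_1^n$ for $(X_1,\dots,X_n)$ and $X_{-\infty}^0$ for the infinite past $(\dots,X_{-1},X_0)$, I would first introduce the relevant conditional entropies: for each finite order $k$ set $H^k = E\!\left[-\log p(X_0 \mid X_{-1},\dots,X_{-k})\right]$, the entropy of one symbol given the preceding $k$, and $H^\infty = E\!\left[-\log p(X_0 \mid X_{-1},X_{-2},\dots)\right]$, the entropy given the entire past. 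Stationarity identifies $H$ with $\lim_k H^k$, and an application of the martingale convergence theorem to the conditional probabilities $p(X_0 \mid X_{-1},\dots,X_{-k})$ shows $H^k \to H^\infty$, so that $H = H^\infty$ is the limit the two bounding sequences must reach.

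Next I would set up two comparison measures. For the upper bound I use the $k$-th order Markov approximation $Q^k(X_1^n) = p(X_1^k)\prod_{i=k+1}^n p(X_i \mid X_{i-1},\dots,X_{i-k})$, for which $-\frac{1}{n}\log Q^k(X_1^n)$ is, up to a vanishing boundary term, a Birkhoff average of the stationary sequence $-\log p(X_i \mid X_{i-k}^{i-1})$; the pointwise ergodic theorem then gives $-\frac{1}{n}\log Q^k(X_1^n)\to H^k$ almost surely. For the lower bound I use the infinite-past factorisation, observing that $-\frac{1}{n}\log p(X_1^n \mid X_{-\infty}^0) = \frac{1}{n}\sum_{i=1}^n -\log p(X_i \mid X_{-\infty}^{i-1})$ converges to $H^\infty$ by the same ergodic theorem.

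The heart of the argument is a ratio lemma transferring these limits to the genuine likelihood $p(X_1^n)$. Since $Q^k$ is itself a probability measure, $E\!\left[Q^k(X_1^n)/p(X_1^n)\right]\le 1$, so Markov's inequality together with the Borel--Cantelli lemma forces $\frac{1}{n}\log\!\big(Q^k(X_1^n)/p(X_1^n)\big)\le 0$ in the limit almost surely, whence $\limsup_n\big(-\frac{1}{n}\log p(X_1^n)\big)\le H^k$; letting $k\to\infty$ yields $\limsup \le H^\infty = H$. A symmetric estimate, using that the \emph{conditional} expectation $E\!\left[p(X_1^n)/p(X_1^n\mid X_{-\infty}^0)\mid X_{-\infty}^0\right]=1$, delivers $\liminf_n\big(-\frac{1}{n}\log p(X_1^n)\big)\ge H^\infty = H$. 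Combining the two bounds pins the limit at $H$.

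I expect the main obstacle to be the careful handling of the infinite conditioning: justifying both the martingale convergence $H^k\to H^\infty$ and the ergodic-theorem limit for the infinite-past average requires controlling the conditional densities closely enough to interchange limits with expectations, and the Borel--Cantelli step must be run against a summable envelope (for instance comparing the ratio against $n^2$) so that the almost-sure control is genuinely uniform along the sample path. The finite-valued hypothesis is precisely what keeps all these entropies finite and makes the envelope arguments clean.
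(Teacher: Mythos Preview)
Your sketch is correct and is precisely the Algoet--Cover sandwich argument. However, the paper does not actually prove this theorem at all: it is stated with a citation to \cite[Thm.~16.8.1]{10.5555/1146355} (Cover and Thomas) and used as a black box throughout, so there is no ``paper's own proof'' to compare against. For what it is worth, the proof in the cited reference is exactly the one you outline---the $k$-th order Markov upper approximation $Q^k$, the infinite-past lower bound, the Markov-inequality/Borel--Cantelli ratio lemma (run against the envelope $n^2$, as you anticipate), and the identification $H=H^\infty$ via Levy's martingale convergence for conditional probabilities---so your proposal matches the intended external source in both structure and detail.
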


We now define a stochastic process that matches our desired model and prove that it satisfies the conditions of theorem \ref{General AEP}. We can then define typical sequences of the model.

\begin{definition}[Network Evolution Chain]
\label{NEC}
A stochastic process \(\{G_i\}\) is called a Network Evolution Chain (NEC) if 
\begin{itemize}
     \item \(G_1\) is a graph with exactly 1 node.
     \item \(G_i\) (\(i>1\)) is a graph created from \(G_{i-1}\). To this end, a transition scheme is first chosen among the following options: deletion, addition, and same. The probability of choosing deletion, addition, or same is \(r(G_{i-1})\), \(t(G_{i-1})\), and \(s(G_{i-1})\), respectively. Therefore, we have \(r(G_{i-1})+t(G_{i-1})+s(G_{i-1})=1\). Based on the chosen transition scheme, one of the following procedures will be done for creating \(G_i\):
     \begin{itemize}
        \item \textbf{Deletion:} \(G_i\) is \(G_{i-1}\) with one of its nodes deleted according to a probability distribution.
        \item \textbf{Addition:}  \(G_i\) is generated by adding a node to \(G_{i-1}\). The edges between the new node and the existing ones are added according to a random graph generation model.
        \item \textbf{Same:} \(G_{i}\) is the same as \(G_{i-1}\).
     \end{itemize}
     \item There exists an integer \(n_{max}\geq 1\) such that if \(G_i\) is a graph with \(n_{max}\) nodes then we have \(t(G_i)=0\). For all other graphs \(t(G_i)\neq 0\).
     \item If \(G_i\) is a graph with only one node, then we have \(r(G_i)=0\). For all other graphs \(r(G_i)\neq 0\).
     \item For all graphs we have \(s(G_{i})\neq 0\).
   \end{itemize}
\end{definition}

Fig~\ref{fig1} illustrates four possible consecutive steps of an instance of an NEC defined based on definition \ref{NEC}.

\begin{figure}[!h]
\centering
\includegraphics[width = \columnwidth]{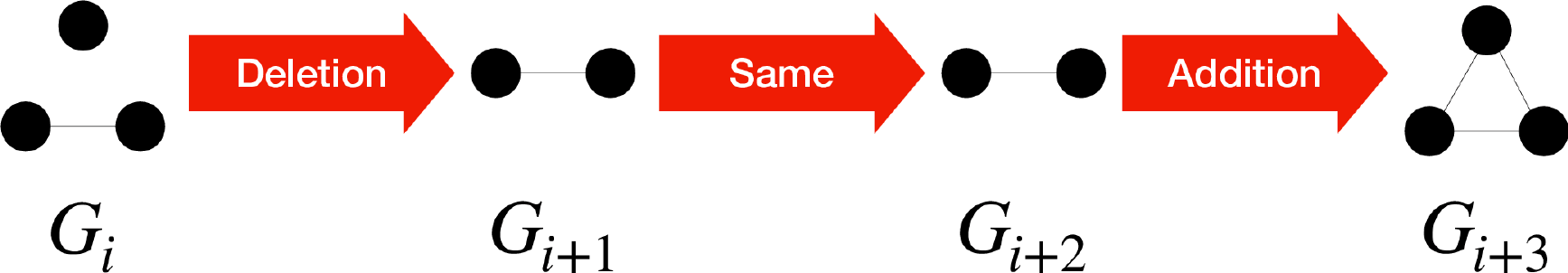}
\caption{\bf Example of four consecutive graphs in an NEC.}

\label{fig1}
\end{figure}

The following statements will aid us in proving that a Network Evolution Chain satisfies the conditions of theorem \ref{General AEP}.

\begin{definition}[Irreducible Markov chain]\cite[Ch.~4]{10.5555/1146355}
\label{Irreducible}
``If it is possible to go with positive probability from any state of a Markov chain to any other state in a finite number of steps, the Markov chain is said to be irreducible.''
\end{definition}

\begin{definition}[Aperiodic Markov chain]\cite[Ch.~4]{10.5555/1146355}
\label{Aperiodic}
``If the largest common factor of the lengths of different paths from a state to itself in a Markov chain is 1, the Markov chain is said to be aperiodic."
\end{definition}

\begin{lemma}\cite[Ch.~4]{10.5555/1146355}
\label{1}
``If the finite-state Markov chain \(\{X_i\}\) is irreducible and aperiodic, the stationary distribution is unique, and from any starting distribution, the distribution of \(X_n\) tends to the stationary distribution as \(n\rightarrow \infty\)."
\end{lemma}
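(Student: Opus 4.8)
The plan is to prove the two assertions — uniqueness of the stationary distribution and convergence of \(X_n\) to it from any start — by reducing the qualitative hypotheses of irreducibility and aperiodicity to a single strong structural property: that some power of the transition matrix has every entry strictly positive. Write \(P\) for the row-stochastic transition matrix of the chain on its finite state space. First I would settle existence, which is the easy part: the set of probability row vectors is compact and convex, and \(\mu \mapsto \mu P\) is a continuous affine self-map of it, so a fixed point \(\pi = \pi P\) exists (equivalently, \(1\) is always an eigenvalue of a stochastic matrix). This carries no information-theoretic content and is merely a warm-up.

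The crux is the claim that irreducibility together with aperiodicity implies \emph{primitivity}: there is an integer \(N\) with every entry of \(P^N\) strictly positive. I would establish this in two stages. By aperiodicity (Definition \ref{Aperiodic}), for each fixed state \(i\) the set of lengths of paths from \(i\) back to itself has greatest common divisor \(1\), and a numerical-semigroup argument then shows \((P^m)_{ii} > 0\) for all sufficiently large \(m\). By irreducibility (Definition \ref{Irreducible}), any state \(j\) is reachable from \(i\) in finitely many steps, and by padding such a path with self-return loops one can make the number of steps uniform across all ordered pairs. Combining the two gives \((P^N)_{ij} > 0\) for every pair \((i,j)\), which is exactly primitivity.

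With primitivity in hand I would conclude in either of two equivalent ways. The spectral route applies the Perron--Frobenius theorem to the positive matrix \(P^N\): its spectral radius \(1\) is a simple eigenvalue whose normalised left eigenvector is the unique stationary distribution \(\pi\), while every other eigenvalue has modulus strictly below \(1\); this spectral gap forces \(\mu P^n \to \pi\) for every initial \(\mu\), delivering uniqueness and convergence simultaneously. The probabilistic route instead runs two coupled copies of the chain, one from an arbitrary \(\mu\) and one from a stationary \(\pi\), and uses primitivity to guarantee that they meet in finite time almost surely; after the meeting time the coupling keeps them identical, so \(\|\mu P^n - \pi\|_{\mathrm{TV}} \to 0\), and uniqueness follows because any two stationary distributions would each be this common limit.

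I expect the primitivity step to be the main obstacle. Existence and the concluding limit are standard once primitivity is available, but extracting a single uniform exponent \(N\) from the purely qualitative hypotheses rests on the number-theoretic observation about greatest common divisors of cycle lengths, and it is precisely here that Definition \ref{Irreducible} and Definition \ref{Aperiodic} are both used in an essential way.
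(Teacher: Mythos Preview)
Your proposal is a correct and well-structured outline of the standard proof: establish primitivity of the transition matrix from irreducibility plus aperiodicity, then invoke either Perron--Frobenius or a coupling argument to obtain both uniqueness and convergence. There is nothing mathematically wrong with it.

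However, there is nothing to compare it against. The paper does not prove Lemma~\ref{1} at all: the statement is quoted verbatim from Cover and Thomas, \emph{Elements of Information Theory}, Chapter~4, and is simply cited as a known background fact. It serves only as an input to Theorem~\ref{NECstat}, where the authors verify that a Network Evolution Chain is finite-state, irreducible, and aperiodic and then invoke this lemma. So your work here goes well beyond what the paper itself does; the paper treats this as a black box from the literature, whereas you have sketched the textbook argument behind it.
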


\begin{corollary}\cite[Theorem~2]{pakes1969some}
\label{Markov Ergodicity}
A finite-state, irreducible, and aperiodic Mar\-k\-ov chain will form an ergodic process.
\end{corollary}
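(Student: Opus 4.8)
The plan is to prove the stronger property of \emph{mixing} and then invoke the standard implication that mixing entails ergodicity for a stationary process. First I would apply Lemma~\ref{1}: because the chain is finite-state, irreducible, and aperiodic, it has a unique stationary distribution $\pi$, and moreover the $n$-step transition probabilities converge, $p_{ij}^{(n)} \to \pi_j$ as $n \to \infty$, for every pair of states $i,j$. I would then initialise the chain with $X_1 \sim \pi$; this makes $\{X_i\}$ a stationary process, which is the setting in which ergodicity is defined.

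Next I would verify the mixing condition $\mu(A \cap T^{-n}B) \to \mu(A)\mu(B)$, where $T$ is the shift operator and $\mu$ the law of the stationary process. It is enough to check this for cylinder sets, since they generate the product $\sigma$-algebra. For $A = \{X_1 = a_1, \dots, X_k = a_k\}$ and $B = \{X_1 = b_1, \dots, X_m = b_m\}$, the Markov property factorises $\mu(A \cap T^{-n}B)$ as $\mu(A)$ times a bridging term $p_{a_k b_1}^{(n+1-k)}$ times the internal transition probabilities of the $B$-block. Letting $n \to \infty$ and using $p_{a_k b_1}^{(n+1-k)} \to \pi_{b_1}$ restores the stationary weight of the first $B$-state and collapses the product to $\mu(A)\mu(B)$.

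To finish, I would extend mixing from cylinders to arbitrary measurable events by a routine approximation argument (a monotone-class or $\varepsilon$-approximation of events by finite unions of cylinders), and then deduce ergodicity: any shift-invariant event $A$ satisfies $T^{-n}A = A$, so $\mu(A) = \mu(A \cap T^{-n}A) \to \mu(A)^2$, which forces $\mu(A) \in \{0,1\}$.

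The hard part will be the passage from cylinders to the full $\sigma$-algebra, where the factorisation is no longer explicit and the approximation error must be controlled uniformly in $n$ so that the limit is preserved. It is worth noting which hypothesis does what: irreducibility alone already yields ergodicity (it makes the invariant $\sigma$-algebra trivial), while aperiodicity is exactly what upgrades the merely Cesàro-type convergence of $p_{ij}^{(n)}$ to genuine convergence, giving the clean limit $\pi_{b_1}$ used above and hence the full mixing property.
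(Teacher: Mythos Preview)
Your proposal is correct and self-contained, whereas the paper's own argument is essentially a one-line citation: it invokes Theorem~2 of Pakes~\cite{pakes1969some}, which gives a drift-type sufficient condition for an irreducible aperiodic chain to be ergodic, and then remarks that a finite state space trivially satisfies that condition. So the paper reduces to a black box, while you build the mechanism directly from Lemma~\ref{1} via mixing on cylinders. One point worth flagging in the comparison: in Pakes' paper ``ergodic'' is used in the Markov-chain sense (positive recurrence and convergence to a unique stationary law), whereas Theorem~\ref{General AEP} requires ergodicity in the dynamical-systems sense (triviality of the shift-invariant $\sigma$-algebra). Your route through mixing establishes precisely the latter, which is what the AEP actually needs; the paper is implicitly relying on the fact that for a finite irreducible chain the two notions agree. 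What the paper's approach buys is brevity; what yours buys is that it proves the statement in the form it will be used, without an extra translation step, and your closing remark correctly isolates that irreducibility alone already forces ergodicity while aperiodicity is only needed for the sharper mixing conclusion.
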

\begin{proof}
Theorem 2 from \cite{pakes1969some} provides a condition under which an irreducible and aperiodic Markov chain is ergodic. It can be easily shown that if the Markov chain is finite-state, it fulfils the specified condition in the mentioned theorem and the Markov chain will be ergodic.
\end{proof}

The following theorem states and proves that a Network Evolution Chain will eventually satisfy the conditions of theorem \ref{General AEP}.

\begin{theorem}
\label{NECstat}
A Network Evolution Chain tends towards a stationary and ergodic stochastic process as \(n\rightarrow \infty\).
\end{theorem}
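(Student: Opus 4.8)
The plan is to recognize the Network Evolution Chain as a time-homogeneous, finite-state Markov chain and then verify that it is irreducible and aperiodic, so that Lemma~\ref{1} delivers convergence to a unique stationary distribution and Corollary~\ref{Markov Ergodicity} delivers ergodicity. First I would observe that $\{G_i\}$ is Markov: by Definition~\ref{NEC} the law of $G_i$ depends only on $G_{i-1}$, through the scheme probabilities $r(G_{i-1})$, $t(G_{i-1})$, $s(G_{i-1})$ and the deletion or addition rules applied to $G_{i-1}$. I would then argue that the state space is finite. Since $G_1$ has a single node and each step changes the node count by at most one, while $t(G)=0$ once a graph reaches $n_{max}$ nodes, every reachable state is a graph on at most $n_{max}$ nodes, and there are only finitely many such graphs.

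Aperiodicity is immediate, and I would dispose of it quickly. Because $s(G)\neq 0$ for every graph $G$, the \textbf{same} scheme endows every state with a self-loop of probability $s(G)>0$. A path of length one from a state to itself forces the greatest common divisor of the return-path lengths to be $1$, so the chain is aperiodic in the sense of Definition~\ref{Aperiodic}.

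The main work, and the step I expect to be the chief obstacle, is irreducibility. I would establish it by routing every pair of states through the unique single-node graph. To reach the single-node graph from an arbitrary state, I would repeatedly invoke the deletion scheme: since $r(G)\neq 0$ whenever $G$ has more than one node, each deletion occurs with positive probability and reduces the node count by one, so after finitely many deletions any graph collapses to the one-node graph regardless of which nodes are removed. To reach an arbitrary target graph from the single-node graph, I would build the target up one node at a time using the addition scheme, introducing the vertices in some fixed order and, at each step, connecting the newly added vertex to exactly those already-present vertices that are its neighbours in the target. The delicate point is that this requires the random graph generation model governing the addition step to assign strictly positive probability to every edge configuration between the new node and the existing ones; under that positivity condition each intermediate addition has positive probability and the whole target is reached with positive probability. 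I would therefore make this assumption on the addition model explicit, after which composing the two routes shows that any state communicates with any other, giving irreducibility in the sense of Definition~\ref{Irreducible}.

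Finally, with the chain shown to be finite-state, irreducible, and aperiodic, Lemma~\ref{1} yields a unique stationary distribution to which the distribution of $G_n$ converges as $n\rightarrow\infty$, so the process tends towards stationarity, and Corollary~\ref{Markov Ergodicity} yields that it forms an ergodic process. Together these establish the claim.
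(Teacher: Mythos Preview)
Your proposal is correct and follows essentially the same route as the paper: verify that an NEC is a finite-state, irreducible, aperiodic Markov chain and then invoke Lemma~\ref{1} and Corollary~\ref{Markov Ergodicity}, with irreducibility established by deleting down to a one-node graph and then adding nodes back up to the target. You are in fact slightly more careful than the paper in one place: you make explicit the positivity assumption on the addition model (that every edge configuration between the new node and the existing ones has strictly positive probability), which the paper tacitly uses but does not state.
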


\begin{proof}
Using lemma \ref{1} and corollary \ref{Markov Ergodicity}, it suffices for us to show that an NEC is a finite-state, irreducible and aperiodic Markov chain. We will talk about each of these in turn. Firstly, based on its definition, an NEC is in fact a Markov chain, as \(G_i\) is a random graph that only depends on \(G_{i-1}\). Furthermore, this Markov chain is also finite-state. Notice that the definition of an NEC does not allow us to have graphs with more than \(n_{max}\) nodes in the chain. Therefore, our state space consists of all graphs whose number of nodes is less than or equal to \(n_{max}\) and greater than or equal to 1, which is finite. An NEC is also irreducible. To prove this, we must present a manner in which we can go from any state to another. Suppose we want to go from graph \(g_1\) to graph \(g_2\), and both of these graphs are in the state space. If \(g_1=g_2\), we are done. Otherwise, we first remove nodes from \(g_1\) one by one until we reach a graph with only one node. Notice that this is possible because \(r(g)\neq 0\) for \(\{g|g\in S,|V(g)|\neq 1\}\) where \(S\) shows the state space of possible graphs and \(|V(g)|\) shows the number of nodes in graph $g$. Afterwards, we can add nodes to our one node graph one by one until we build \(g_2\). This is also possible because \(t(g)\neq 0\) for \(\{g|g\in S,|V(g)|\neq n_{max}\}\). To prove that an NEC is aperiodic, notice that the shortest path we can take from a graph to itself is of length 1. This is because for all \(g\in S\) we have \(s(g)\neq 0\). Since the largest common factor of any number and 1 is 1, we have proved that this chain is aperiodic. Now that we have proved that an NEC fulfils all the conditions in lemma \ref{1} and corollary \ref{Markov Ergodicity}, we can conclude that an NEC will tend towards a stationary and ergodic process as \(n\rightarrow \infty\).

\end{proof}

The NEC provides us with a model that generates a chain of random graphs. The name `Network Evolution Chain' is due to the fact that an NEC can beautifully describe the manner in which networks evolve over time. Every network initially begins with a single node, just like an NEC. Afterwards, at any point in time, a node can be added or deleted, or the network can stay the same as it was. Additionally, we may need to add or remove edges, but not nodes. These cases can be simulated by first deleting a node from one end of the edge we want to add or remove and then add that node again with our desired change in its edge connections. 

To model the evolution of a network using an NEC, there are three important factors that need to be chosen according to that specific network. The first one is the model according to which nodes and edges are added to the graph in case we are adding a node. The $G(n,p)$ model \cite{gilbert1959} and the Barabási–Albert model \cite{albert2002statistical} are examples of the models that can be used. The second factor is the probabilities of transition. In other words, \(r(g)\), \(t(g)\), and \(s(g)\) need to be defined for all \(g \in S\). There are many ways to set these probabilities. They can only depend on the number of nodes in \(g\) or can be unique to each graph. The final factor is the probability distribution according to which the deleted node is chosen in case of a deletion event. For example, the node to be deleted can be chosen uniformly among all the nodes in the graph, or nodes with fewer edges can have a higher probability of getting deleted. Undoubtedly, the best way to chose these protocols is according to the data observed from the behavior of the network we want to simulate.

\section{Entropy Rate and Typical Sequences}
\label{entropy rate and typical}

We have now defined a process called a Network Evolution Chain, which is both ergodic and stationary for large $n$. Assume the process \(\{G_i\}\) to be an NEC generated using definition \ref{NEC}. According to Theorem \ref{1}, this chain will tend towards its stationary distribution as $n$ grows large. Therefore, we can use theorem \ref{General AEP} and write

\begin{equation}
\label{H}
    -\frac{1}{n}\log p(G_1,\ldots,G_n)\rightarrow H(G)\quad with~probability~1.
\end{equation}

where \(H(G)\) shows the entropy rate of the stochastic process \(\{G_i\}\).

Based on equation \ref{H}, we can define the typical sets of an NEC.

\begin{definition} \cite[Ch.~3]{10.5555/1146355}
\label{Typical Set}
The typical set \(A^{(n)}_\epsilon\) with respect to a defined NEC is the set of all sequences \((G_1,G_2,...,G_n)\in S^n\) with the property
\[2^{-n(H(G)+\epsilon)}\leq p(G_1,G_2,...,G_n)\leq 2^{-n(H(G)-\epsilon)}.\]
\end{definition}

It can be proved that ``the typical set has probability nearly 1, all elements of the set are nearly equiprobable, and the number of elements in the set is nearly \(2^{nH}\)." \cite[Ch.~3]{10.5555/1146355}.

We now need to calculate the entropy rate of NEC. To this end, we use a result from \cite{10.5555/1146355}, which is stated in the following lemma.

\begin{lemma}\cite[Theorem~4.2.1]{10.5555/1146355}
\label{lemma entropy rate}
For a stationary stochastic process $\{X_i\}$, the following limit exists and is equal to the entropy rate of the process:
\[H(X)=\lim_{n \to \infty} H(X_n|X_{n-1},X_{n-2},...,X_1)\]
\end{lemma}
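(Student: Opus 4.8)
The plan is to prove the lemma in two movements: first establish that the sequence of conditional entropies converges, and then show that its limit coincides with the per-symbol entropy rate $\lim_{n\to\infty}\frac{1}{n}H(X_1,\ldots,X_n)$, which is the quantity usually taken as the definition of $H(X)$. Throughout I would rely on two elementary facts from information theory: that conditioning cannot increase entropy, and the Cesàro mean property that the running averages of a convergent sequence converge to the same limit.

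First I would show that $a_n := H(X_n\mid X_{n-1},\ldots,X_1)$ is non-increasing in $n$. The key computation chains two observations:
\[a_{n+1}=H(X_{n+1}\mid X_n,\ldots,X_1)\le H(X_{n+1}\mid X_n,\ldots,X_2)=H(X_n\mid X_{n-1},\ldots,X_1)=a_n,\]
where the inequality is ``conditioning reduces entropy'' (dropping the variable $X_1$ from the conditioning set), and the middle equality is exactly where stationarity enters: shifting every index down by one leaves the joint distribution, and hence the conditional entropy, unchanged. Since each $a_n\ge 0$, the sequence is monotone and bounded below, so it converges to some limit, which I will call $H'$.

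Next I would connect $H'$ to the per-symbol rate via the chain rule for entropy, writing
\[\frac{1}{n}H(X_1,\ldots,X_n)=\frac{1}{n}\sum_{i=1}^{n}H(X_i\mid X_{i-1},\ldots,X_1)=\frac{1}{n}\sum_{i=1}^{n}a_i.\]
Because $a_i\to H'$, the Cesàro mean theorem guarantees that the right-hand side also tends to $H'$. Hence the per-symbol limit exists and equals $H'=\lim_{n\to\infty}a_n$, which is precisely the assertion of the lemma.

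The routine ingredients here (non-negativity of entropy, the chain rule, Cesàro averaging) are standard, so I do not expect them to be the obstacle. The one step that genuinely requires care is the middle equality in the monotonicity argument, since it is the sole place where the stationarity hypothesis is invoked, and the lemma is false without it. I would therefore spell that equality out explicitly, appealing to the definition of stationarity as invariance of the joint law of $(X_1,\ldots,X_{n+1})$ under an index shift, to make clear that the argument is not merely a relabelling but a genuine consequence of the process being stationary.
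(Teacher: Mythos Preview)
Your argument is correct and is precisely the standard textbook proof. Note, however, that the paper does not supply its own proof of this lemma: it is quoted as a result from the cited reference (Cover and Thomas, Theorem~4.2.1), so there is nothing in the paper to compare against beyond the citation itself. Your two-step approach---monotonicity of $H(X_n\mid X_{n-1},\ldots,X_1)$ via stationarity plus ``conditioning reduces entropy,'' followed by the chain rule and a Ces\`aro-mean argument---is exactly the proof given in that reference, so you have faithfully reconstructed the intended argument.
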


The entropy rate defined in lemma \ref{lemma entropy rate} can be interpreted in two ways. Firstly, based on its definition, it shows the conditional entropy of the last observed random variable given all the symbols that came before it. Secondly, \cite{10.5555/1146355} proves that the defined entropy rate also indicates the per symbol entropy of the $n$ observed random variables.

As an NEC is a stationary stochastic process for large $n$, we can use lemma \ref{lemma entropy rate} to calculate its entropy rate. We can write the following chain of equations.

\begin{subequations}
\begin{align}
\label{12}
H(G)=&\lim_{n \to \infty} H(G_n|G_{n-1},G_{n-2},...,G_1)\\
\label{13}
=&\lim_{n \to \infty}H(G_n|G_{n-1})\\
\label{extra}
&\begin{aligned}
=&\lim_{n \to \infty}H(G_n|G_{n-1},T)\\
&+H(T|G_{n-1})-H(T|G_{n-1},G_n)\\
\end{aligned}\\
\label{14}
=&\lim_{n \to \infty}H(G_n|G_{n-1},T)+H(T|G_{n-1})
\end{align}
\end{subequations}

Equality \ref{12} is the definition of entropy rate based on lemma \ref{lemma entropy rate}. Equality \ref{13} holds because of the fact that \(G_n\) is only dependent on \(G_{n-1}\). Equality \ref{extra} comes from taking into account the three states of addition, deletion, and remaining the same. $T$ is defined as a random variable that indicates which of the progression methods we have chosen for going from \(G_{n-1}\) to \(G_n\). In other words, we have \(T\in\{addition,deletion,same\}\). Finally, equality \ref{14} holds because having both \(G_{n-1}\) and \(G_n\), there will be no uncertainty about which transition scheme has been chosen for going from \(G_{n-1}\) to \(G_n\) and therefore we will have \(H(T|G_{n-1},G_n)=0\). For calculating \(H(T|G_{n-1})\), we can write
\begin{equation}
\label{H(T)}
H(T|G_{n-1})=-\sum_{g\in S}p(g)(t(g)\log t(g)+r(g)\log r(g)+s(g)\log s(g)).
\end{equation}
In equation \ref{H(T)}, \(p(g)\) shows the stationary distribution of graph \(g\). Now that \(H(T|G_{n-1})\) has been calculated, we move on to calculate \(H(G_n|G_{n-1},T)\).

\begin{subequations}
\begin{align}
\label{15}
\begin{split}
    H(G_n|G_{n-1},T)=&\sum_{g\in S}p(g)t(g)H(G_n|G_{n-1}=g,T=addition)\\
    &+\sum_{g\in S}p(g)r(g)H(G_n|G_{n-1}=g,T=deletion)\\
    &+\sum_{g\in S}p(g)s(g)H(G_n|G_{n-1}=g,T=same)
\end{split}\\
\label{16}
\begin{split}
=&\sum_{g\in S}p(g)t(g)H(G_n|G_{n-1}=g,T=addition)\\
    &+\sum_{g\in S}p(g)r(g)H(G_n|G_{n-1}=g,T=deletion)
\end{split}
\end{align}
\end{subequations}

Equality \ref{15} comes from conditional entropy and equality \ref{16} comes from the fact that we have \(H(G_n|G_{n-1}=g,T=same)=0\) because there remains no uncertainty when a graph stays the same as it was before. Notice that equation \ref{16} can not be more simplified because the node addition and deletion models are not known.

We can now provide the following equation for calculating the entropy rate of a Network Evolution Chain.
\begin{multline}
    \label{entropy rate calculation}
        H(G) = -\sum_{g\in S}p(g)(t(g)\log t(g)+r(g)\log r(g)+s(g)\log s(g))\\
         +\sum_{g\in S}p(g)t(g)H(G_n|G_{n-1}=g,T=addition)\\
    +\sum_{g\in S}p(g)r(g)H(G_n|G_{n-1}=g,T=deletion)
\end{multline}
We observe that calculating the exact value of the entropy rate depends on the factors below:
\begin{enumerate}
    \item \(t(g)\), \(r(g)\), and \(s(g)\) for all \(g\in S\).
    \item Model of choice for adding a node and its edges to \(g\) for all \(g\in S\).
    \item Model of choice for deleting a node from \(g\) for all \(g\in S\).
    \item The stationary probability distributions of all \(g\in S\).
\end{enumerate}
After knowing these factors about a Network Evolution Chain, the entropy rate for the chain can be calculated using equation \ref{entropy rate calculation}.

It is notable to mention the differences between the entropy rate calculated in this section with that of \cite{zhao2011entropy}. Firstly, \cite{zhao2011entropy} does not take into account the shrinkage of a network and the entropy is calculated asymptotically as the network grows large. Secondly, \cite{zhao2011entropy} has taken into account only two possible ways of connecting nodes: one based on nodes having a constant mean degree, and the other one based on the network being a tree. This is in contrast to an NEC, which can simulate any type of connection between nodes. Finally, the network size in \cite{zhao2011entropy} is not bounded, whereas in an NEC the network size is limited to $n_{max}$. All in all, these two approaches are different in their own ways and each one must be utilized in its appropriate application scenario.


\section{Functions of Network Evolution Chains}
\label{Functions of NEC}

We have introduced Network Evolution Chains in order to simulate the way networks evolve over time. When a certain network is simulated using an NEC, we have the whole graph of the network at each step as our random variable and there will be no information loss. However, if we want to only look at certain properties of the network, it may not be efficient to store the whole graphs in our chain. In this section, we introduce functions of Network Evolution Chains as a method for extracting specific information from the networks in the chain. To this end, we define Network Property Chains.

\begin{definition}[Network Property Chain (NPC)]
Consider the Network Evolution Chain \(G=\{G_1,G_2,...,G_n\}\). We define \(Y=\{Y_1,Y_2,...,Y_n\}\) as a Network Property Chain of \(G\) if there exists a function \(f\) so that we have
\[\forall i \in \{1,2,...,n\}\qquad Y_i=f(G_i).\]

\end{definition}

It can be seen that an NPC is defined using an NEC and a function \(f\). We can use Network Property Chains to look for certain properties and behaviors of a network. Specifically, we choose $f$ so that it returns the property of its input that we are interested in.

Notice that a function of a Markov chain is not necessarily a Markov chain. However, the following theorem will aid us in analysing Network Property Chains without needing them to be Markov chains.

\begin{theorem}
\label{ergodic stationary NPC}
A Network Property Chain is a stationary and ergodic stochastic process.
\end{theorem}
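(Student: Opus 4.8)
The plan is to exploit the fact that both stationarity and ergodicity are preserved when a stationary ergodic process is passed through a deterministic, time-invariant (coordinate-wise) function. By Theorem \ref{NECstat} the underlying process \(\{G_i\}\) is stationary and ergodic in the limit of large \(n\), so the task reduces to transferring these two properties across the map \(f\). Crucially, I would \emph{not} try to show that \(\{Y_i\}\) is itself a Markov chain: as the surrounding text notes, a function of a Markov chain need not be Markovian, so the argument must be made at the level of joint distributions and the shift dynamics rather than by re-invoking the transition-matrix machinery used for Theorem \ref{NECstat}. Since the graphs live in the finite state space \(S\), every function \(f\) on it is automatically measurable, so no regularity assumption on \(f\) is needed.

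First I would establish stationarity. Fix a block length \(k\) and a shift \(m\). Because \(Y_i=f(G_i)\) applies the same deterministic map to each coordinate, the joint law of \((Y_{m+1},\ldots,Y_{m+k})\) is the pushforward, under the coordinate-wise map, of the joint law of \((G_{m+1},\ldots,G_{m+k})\). Stationarity of \(\{G_i\}\) makes the latter law independent of \(m\), and pushing forward by a fixed map cannot reintroduce an \(m\)-dependence, so the law of \((Y_{m+1},\ldots,Y_{m+k})\) is independent of \(m\) as well. As \(k\) and \(m\) are arbitrary, \(\{Y_i\}\) is stationary.

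Second, and this is the delicate step, I would prove ergodicity by a factor-map argument. Let \(T\) denote the shift acting on realizations of \(\{G_i\}\), let \(S\) denote the shift acting on realizations of \(\{Y_i\}\), and let \(\pi\) be the coordinate-wise application of \(f\) sending a \(G\)-sequence to the corresponding \(Y\)-sequence. The key identity is \(\pi\circ T = S\circ\pi\), i.e.\ applying \(f\) coordinate-wise commutes with shifting. Given any shift-invariant event \(A\) in the \(Y\)-sequence space, its preimage \(\pi^{-1}(A)\) is then a \(T\)-invariant event in the \(G\)-sequence space. Ergodicity of \(\{G_i\}\) forces \(\pi^{-1}(A)\) to have probability \(0\) or \(1\), and since \(A\) and \(\pi^{-1}(A)\) carry the same probability under the pushforward measure, \(A\) must have probability \(0\) or \(1\) as well. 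Hence the shift governing \(\{Y_i\}\) is ergodic.

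The main obstacle I anticipate is precisely this ergodicity step: because the NPC is generally non-Markovian, ergodicity cannot be read off from irreducibility and aperiodicity of a transition matrix as in Theorem \ref{NECstat}, and instead relies on the measure-theoretic observation that a deterministic factor of an ergodic process inherits ergodicity. The stationarity half, by contrast, is essentially bookkeeping about pushforward distributions. One subtlety worth flagging is that, since \(\{G_i\}\) is itself only stationary and ergodic asymptotically, the statement should be read in the stationary regime (large \(n\)), and the same limiting qualification carries over to \(\{Y_i\}\).
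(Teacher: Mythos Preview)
Your proposal is correct and, for stationarity, is essentially the paper's argument stated more abstractly: the paper writes out an explicit formula for \(p(Y_i,\ldots,Y_{i+k})\) in terms of the underlying NEC probabilities and observes that none of the ingredients depend on the time index, which is precisely your pushforward observation spelled out concretely. For ergodicity, however, you take a genuinely different route. The paper invokes Birkhoff's ergodic theorem as a characterization of ergodicity: it argues that every (measurable) function of the NPC is, by composition with \(f\), also a function of the NEC, and since the NEC is ergodic all such functions have time averages equal to space averages, whence the NPC is ergodic. You instead work directly with the invariant-sets definition, pulling back a shift-invariant \(Y\)-event through the intertwining relation \(\pi\circ T=S\circ\pi\) to a shift-invariant \(G\)-event of trivial probability. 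Both arguments are standard ways of establishing that a deterministic factor of an ergodic system is ergodic; yours is the more common formulation in ergodic theory and is arguably cleaner, while the paper's version via Birkhoff has the advantage of staying closer to the time-average language already used for the AEP. Your closing remark that the conclusion holds in the stationary regime (large \(n\)) is also well taken and matches the tacit assumption in the paper.
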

\begin{proof}
We first prove that an NPC is a stationary process. Consider \(P_i\) as the set of all \(g\in S\) for which we have \(f(g)=Y_i\). We can calculate the probability \(p(Y_i,...,Y_{i+k})\) using the following equation.

\begin{equation}
    \label{brute-force}
    \begin{split}
    p(Y_i,Y_{i+1},...,Y_{i+k})&=p(Y_i)\prod_{j=i+1}^{i+k}p(Y_j|Y_{j-1},...,Y_{i})\\
    &=p(P_i)\prod_{j=i+1}^{i+k}\sum_{g_1\in P_{j-1}}\sum_{g_2\in P_j}p(g_2|g_1)
    \end{split}
\end{equation}

As we know that an NEC is a stationary process, there are no terms in equation \ref{brute-force} which are time-variant and therefore an NPC is also stationary.

To prove that an NPC is ergodic, we use Birkhoff's ergodic theorem \cite{birkhoff1931proof}. For using Birkhoff's theorem to prove ergodicity, we need to show that all functions of an NPC satisfy Birkhoff's condition. Firstly, all functions of an NPC are also a function of its respective NEC. To see this, notice that if the function $f$ which is used for building the NPC is injective, then this statement is easily concluded and if $f$ is surjective, then any function of the NPC can be seen as a function of its respective NEC where the function takes on equal values for all $g_i$ and $g_j$ for which we have $f(g_i)=f(g_j)$. Additionally, we know that an NEC is in fact ergodic and all functions of it satisfy the condition of Birkhoff's theorem. Therefore, an NPC is also an ergodic process.

\end{proof}

Theorem \ref{ergodic stationary NPC} shows that Network Property Chains are stationary and ergodic processeses. Therefore, we can apply the Asymptotic Equipartition Property to these chains as well. We will end up with the following equation after applying the AEP to NPC.

\begin{equation}
    \label{AEP to NPC}
    -\frac{1}{n}\log p(Y_1,Y_2,...,Y_n)\to H(Y)\quad \text{as} \quad n\to \infty
\end{equation}

Applying the AEP to an NPC gives us the opportunity to define typical sequences for Network Property Chains. This means that we will be able to obtain the typical ways that properties of the network will evolve over time. For example, assume that we are interested in the number of triangles in the network. We will define \(f(g),g\in S\) so that it returns the number of triangles in \(g\). The typical set of \(\{Y_i\}\) will then contain the most probable evolution chains of the number of triangles in the network. 

Notice that as an NPC is not necessarily a Markov chain, calculating its entropy rate can not be done in the same way as NECs. However, we will be able to estimate its entropy rate using the AEP. To this end, we first need \(p(Y_1,Y_2,...,Y_n)\), which can be calculated using equation \ref{brute-force}. However, equation \ref{brute-force} may be very difficult to compute, particularly for NECs with a large \(n_{max}\). To assist us with the computation, we use the concept of a Hidden Markov Model (HMM) \cite{rabiner1986introduction}. HMMs are very useful for simulating Network Property Chains. As mentioned earlier, one of the goals for introducing NPCs was to remove the need for storing the whole graphs. In an HMM, the assumption is that the original Markov chain is not available (it is hidden), and we can only observe another chain that is based on our original Markov chain. The transition probabilities for the Hidden Markov Model \((G_n,Y_n)\) are the same as the transition probabilities for the NEC \(\{G_i\}\). Assume the set \(P\) to be the image of \(S\) under $f$. For the emission probabilities of the HMM for all \(g\in S\) and \(y\in P\) we have
\[
    p(y|g)= 
\begin{cases}
    1,& \text{if } f(g)=y\\
    0,              & \text{otherwise}
\end{cases}.
\]

Now that we have defined our Hidden Markov Model, we can make use of its properties to efficiently calculate the probability of the observed sequence. HMMs provide us with the forward algorithm \cite{rabiner1989tutorial} which is used to efficiently calculate the probability of an observed sequence. Therefore, our solution for calculating \(p(Y_1,Y_2,...,Y_n)\) is to first define the Hidden Markov Model $(G_n,Y_n)$ and then use the forward algorithm. This will provide us with the sufficient tool to calculate the probability of observed sequences, estimate the entropy rate of the Network Property Chain, and also recognize typical sequences.


\section{Continuous Time Network Evolution Chains}

We have observed that Network Evolution Chains have the ability to simulate the dynamics of networks. However, they fail to provide us with a simulation for the network in continuous time. In this section, we provide two methods to overcome this limitation.

\subsection{The \(s(g)\) Method} \label{Continous}

The first method uses the value of \(s(g)\)(\(g\in S\)) to simulate the continuous time domain. The idea is that we first quantize time and consider the time between each step of the chain to be a constant \(\tau\). Now, for each \(g\in S\), we can set \(s(g)\) according to how much time we expect $g$ to remain unchanged. Consider \(L(g)\) to be the random variable that shows the time that graph \(g\) stays unchanged and \(Q(g)\) to be the random variable that shows the number of consecutive steps that staying the same is chosen for graph \(g\). We can write

\begin{equation}
\begin{split}
    \mathbb{E}[L(g)]&=\tau(\mathbb{E}[Q(g)]+1)\\
    &=\tau (1+\sum_{i=0}^{\infty}i(s(g))^{i}(1-s(g)))\\
    &=\frac{\tau}{1-s(g)}.
    \label{psame method}
\end{split}
\end{equation}

After choosing our desired value for the time quantization variable \(\tau\), we can use equation \ref{psame method} to set the values for \(s(g)\) for all \(g\in S\). This way, the expected time that each network stays unchanged will match our desired value.

\subsection{The Continuous Time Markov Chain Method}

Continuous Time Markov Chains (CTMC) \cite[Ch.~5]{ross1996stochastic} are an extension to Markov chains with the goal to simulate Markov chains in the continuous time domain. Therefore, we can use Continuous Time Markov Chains to simulate an NEC in continuous time.

\begin{definition}
A Continuous Time Network Evolution Chain is a Continuous Time Markov Chain $G(t)$ in which the states and the transition probabilities are the same as a Network Evolution Chain and additionally, \(\lambda(g)\) is chosen for all \(g\in S\) as the parameter for the exponential distribution used for determining the time we stay at state \(g\).
\end{definition}

We can see that the states are generated in the same manner as explained for the discrete time NEC. Furthermore, we remain in \(G_i\) for \(\tau_i\) seconds. \(\tau_i\) is generated using an exponential distribution with parameter \(\lambda(G_i)\). We therefore have to choose a value \(\lambda(g)\) for all \(g\in S\). This value is what determines how much time we wait in each state according to an exponential distribution with its expected value equal to $1/\lambda(g)$. 

Suppose that we label the graphs in $S$ from 1 to $|S|$ according to any ordering. Let \(p_{ij}\) denote the probability of going from state $i$ to $j$ in the respective NEC. We use the following notation from chapter 5 of \cite{ross1996stochastic} to provide a result about the stationary distribution of Continuous Time Network Evolution Chains.
\[p_{ij}(t)=p(G(t+s)=j|G(s)=i)\]

\[
    r_{ij}= 
\begin{cases}
    \lambda(i)p_{ij},& \text{if } i\neq j\\
    -\lambda(i),& \text{if } i=j\\
\end{cases}
\]

$p_{ij}$ represents the probability that the process goes to state $j$ when it leaves state $i$. Based on the definition, \(p_{ij}(t)\) represents the probability of the process with current state of $i$ moves to state $j$ after time $t$ has passed, and \(r_{ij}\) represents the rate of going from state $i$ to state $j$. To assist us with the calculations, the matrices $\textbf{\textit{R}}$, $\textbf{\textit{P}}(t)$, and $\textbf{\textit{P}}^\prime(t)$ are defined. The element in row $i$ and column $j$ of these matrices are, respectively, $r_{ij}$,  $p_{ij}(t)$, and $p_{ij}^\prime(t)$ \cite[Ch.~5]{ross1996stochastic}. It is stated in \cite{ross1996stochastic} that the following equation is the solution for finding the stationary distribution of a CTMC.

\begin{equation}
    \label{Stationary Solution CTMC}
    \textbf{\textit{P}}(t)=e^{\textbf{\textit{R}}t} = \sum_{t=0}^\infty (\textbf{\textit{R}}t)^i/i!
\end{equation}
where \(\textbf{\textit{R}}^0\) is the identity matrix.

Equation \ref{Stationary Solution CTMC} provides us with the solution to find the stationary continuous time distributions. However, it may be computationally infeasible to calculate the distribution using this equation. To overcome this challenge, \cite{ross1996stochastic} presents a method to estimate the answer to equation \ref{Stationary Solution CTMC} \cite[pg.~250]{ross1996stochastic}.

\section{Erdős–Rényi Network Evolution Chains}
The \(G(n,p)\) model \cite{gilbert1959}, often called the Erdős–Rényi model, is a mathematical model used in graph theory for generating random graphs. In this paper, we use the terms \(G(n,p)\) and Erdős–Rényi interchangeably. A \(G(n,p)\) graph is a graph with \(n\) nodes, and each possible edge in this graph is present independently from others with probability \(p\). \(G(n,p)\) graphs can be generated by starting with only one node, and then adding other nodes to the graph one by one. Each new node that is added gets connected to each of the existing nodes with probability \(p\), independent from others.

If we create a Network Evolution Chain whose addition step is based on \(G(n,p)\), we are interested for the stationary conditional probability of observing a graph $g$ given its number of nodes \(|V(g)|\) to match the probability of the same graph if it was generated by the \(G(|V(g)|,p)\) model. Providing a necessary and sufficient condition for an NEC to achieve this property is challenging. Instead, we introduce a subset of NEC as Erdős–Rényi Network Evolution Chains and then prove that it satisfies our desired condition.

We will use the notation \(S_n\) to show the subset of graphs in \(S\) that have \(n\) nodes. Furthermore, to avoid confusion in calculations in which \(p\) is used to show the probability of an event, we will use \(q\) instead of \(p\) as the parameter for the Erdős–Rényi model.

\begin{definition}[Erdős–Rényi Network Evolution Chain (ERNEC)]
\label{ERNEC Definition}
We define an Erdős–Rényi Network Evolution Chain as a Network Evolution Chain with the following additional properties:

\begin{itemize}
    \item for all \(1\leq n\leq n_{max}\) and \(g_1,g_2\in S_n\) we have:
    \begin{itemize}
        \item \(t(g_1)=t(g_2)\)
        \item \(r(g_1)=r(g_2)\)
        \item \(s(g_1)=s(g_2)\)
    \end{itemize}
    \item The node in the first graph of the chain is labeled as 1.
    \item If at any point in time, addition is chosen, a new node with its label equal to the highest existing label plus one gets added to the graph. The edges between the new node and the existing nodes are added using the Erdős–Rényi model.
    \item If at any point in time, deletion is chosen, one of the existing nodes gets deleted according to a uniform distribution on the existing nodes. Consider the label of the deleted node to be \(i\). After the deletion of node \(i\), the labels of all existing nodes whose label is greater than \(i\) are reduced by one.
\end{itemize}
\end{definition}

We now move on to prove that an ERNEC has the \(G(n,q)\) distribution when conditioned on its number of nodes. We first prove the following lemma.

\begin{lemma}
\label{dependence}
For an Erdős–Rényi Network Evolution Chain created based on the \(G(n,q)\) model, The stationary distribution \(p(g)\) for \(g\in S_n,n\geq2\) is in the form of \(Cq^k(1-q)^{(n-1-k)}p(g^\prime)\), with \(g^\prime\) being \(g\) without node \(n\) (its newest node), C being a constant value among all \(g\in S_n\) and k being the number of nodes in \(g^\prime\) to which node \(n\) is connected.
\end{lemma}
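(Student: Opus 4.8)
The plan is to pin down the stationary law through its balance equations and exploit its uniqueness. By Theorem~\ref{NECstat} an ERNEC is a finite, irreducible, aperiodic Markov chain, so Lemma~\ref{1} gives a unique stationary distribution; it therefore suffices to exhibit any $p$ with $p(g)=\sum_{h\in S}p(h)P(h\to g)$ for all $g$. By Definition~\ref{ERNEC Definition} the move probabilities depend only on the number of nodes, so write $s_n,t_n,r_n$ for their common values on $S_n$. Fix $g\in S_n$ with $n\ge2$, let $g'\in S_{n-1}$ be $g$ with its newest node $n$ deleted, and $k=\deg_g(n)$. The one-step predecessors of $g$ are of three kinds: $g$ itself via \emph{same}, contributing $s_n\,p(g)$; a unique \emph{addition} predecessor, which must be $g'$ (addition always introduces the top label, so the only way to reach $g$ is to adjoin node $n$ to $g'$), contributing $t_{n-1}\,q^{k}(1-q)^{\,n-1-k}p(g')$ since the Erdős–Rényi step realises node $n$'s $k$ edges with probability $q^{k}(1-q)^{n-1-k}$; and \emph{deletion} predecessors in $S_{n+1}$. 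The addition term already has the shape claimed, so the lemma reduces to controlling the deletion inflow.

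First I would reorganise the deletion inflow. Reversing the ERNEC deletion rule (remove a uniform node, shift larger labels down), the pairs $(w,i)$ with $w\in S_{n+1}$ whose deletion of node $i$ returns $g$ are in bijection with pairs $(i,e)$, an insertion position $i\in\{1,\dots,n+1\}$ and an edge pattern $e\in\{0,1\}^{n}$ for the inserted node; thus the inflow is $\frac{r_{n+1}}{n+1}\sum_{i=1}^{n+1}\sum_{e}p(w_{i,e})$, where $w_{i,e}$ denotes $g$ with a node of pattern $e$ inserted at position $i$. This sum is the crux: for a non-top insertion the relabelling scrambles the ``newest node'' structure, so the recursive shape at level $n+1$ does not by itself evaluate it.

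To resolve this I would guess the explicit form $p(g)=A_{m}\,q^{E(g)}(1-q)^{\binom{m}{2}-E(g)}$ for $g\in S_{m}$, with $E(g)$ the edge count, and verify it solves every balance equation. This law depends on $g$ only through $E(g)$ (it is exchangeable), so $p(w_{i,e})$ depends only on $E(g)+|e|$ and not on $i$; hence all $n+1$ positions contribute equally and the identity $\sum_{e\in\{0,1\}^{n}}q^{|e|}(1-q)^{\,n-|e|}=1$ collapses the edge sum, turning the deletion inflow into the single constant multiple $r_{n+1}\tfrac{A_{n+1}}{A_n}p(g)$; similarly the addition term becomes $t_{n-1}\tfrac{A_{n-1}}{A_n}p(g)$. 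Every balance equation thereby reduces to the scalar recursion $A_{n}(1-s_{n})=t_{n-1}A_{n-1}+r_{n+1}A_{n+1}$, which is exactly the balance relation of the birth–death chain formed by the number of nodes; since $\sum_{g\in S_n}q^{E(g)}(1-q)^{\binom{n}{2}-E(g)}=1$, its solution $A_n$ is just that chain's (positive, normalisable) stationary distribution, and uniqueness identifies this $p$ with the stationary law.

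Finally I would read off the lemma. With the deletion inflow equal to $D_n\,p(g)$ for a constant $D_n=r_{n+1}A_{n+1}/A_n$ independent of $g\in S_n$, the balance equation becomes $(1-s_n-D_n)\,p(g)=t_{n-1}q^{k}(1-q)^{\,n-1-k}p(g')$, i.e.\ the claimed $p(g)=C\,q^{k}(1-q)^{\,n-1-k}p(g')$ with $C=t_{n-1}/(1-s_n-D_n)$ the same for all $g\in S_n$. A clean sanity check is the top level $n=n_{max}$, where $S_{n_{max}+1}=\emptyset$ kills the deletion inflow and $1-s_{n_{max}}=r_{n_{max}}$ yields $C=t_{n_{max}-1}/r_{n_{max}}$ outright; this is also where a downward induction in $n$ would begin. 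I expect the deletion inflow — specifically the relabelling caused by deleting a non-top node — to be the only genuinely delicate step; once the exchangeable Erdős–Rényi form is available, the rest is bookkeeping.
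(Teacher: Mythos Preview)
Your argument is correct and takes a genuinely different route from the paper. The paper proceeds by \emph{backward induction} on $n$: the base case $n=n_{max}$ has empty deletion inflow, and for the step they write the same three-term balance equation you do, then use a symmetry argument to replace the sum over all $n{+}1$ insertion positions by $n{+}1$ copies of the top-position sum $\sum_j p(g^{+}_{(n+1)j})$; the induction hypothesis at level $n{+}1$ (with its constant $C^{+}$) then collapses the edge-pattern sum to $r(g^{+})C^{+}p(g)$, yielding $C=t(g')/(1-s(g)-r(g^{+})C^{+})$. You instead \emph{guess} the explicit Erd\H{o}s--R\'enyi form $p(g)=A_{|V(g)|}\,q^{E(g)}(1-q)^{\binom{|V(g)|}{2}-E(g)}$ globally and verify it against every balance equation, using exchangeability of the ansatz to make the position sum trivially constant in $i$; uniqueness then pins down $p$, and the lemma is read off with $C=t_{n-1}/(1-s_n-r_{n+1}A_{n+1}/A_n)$. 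What each buys: the paper's induction never needs to name the closed form in advance, but its crucial ``symmetry'' step---that $\sum_e p(g^{+}_{i,e})$ is independent of the slot $i$---is precisely the exchangeability you build in by ansatz, so your route makes that step fully transparent and in fact proves Theorem~\ref{proof ER} simultaneously; the price is that your argument is a verification rather than a derivation and requires anticipating the answer.
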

\begin{proof}
We will prove this lemma using backward induction on the number of nodes in the graph. Therefore, we have to start with graphs with \(n_{max}\) nodes. For \(g \in S_{n_{max}}\) we can write
\begin{align}
p(g)&=s(g)p(g)+p(g^\prime)t(g^\prime)q^{k}(1-q)^{n_{max}-1-k}\\
\label{g}
&=\frac{p(g^\prime)t(g^\prime)q^{k}(1-q)^{n_{max}-1-k}}{1-s(g)}.
\end{align}

In equation \ref{g}, \(g^\prime\) is the graph \(g\) after its node with label \(n_{max}\) is deleted from the graph and $k$ is the number of edges connected to node \(n_{max}\). We can see that as \(t(g^\prime)/(1-s(g))\) is a constant value for all \(g\in G_{n_{max}}\), our base case is proved.

We will now move on to our induction step. Suppose that we know that the statement holds for all \(S_k\) where \(k>n\). We now have to prove the case for \(S_n\). For all \(g\in S_n\) we can write
\begin{equation}
\label{eq1}
\begin{split}
p(g)=s(g)p(g)+t(g^\prime)p(g^\prime)q^k(1-q)^{n-1-k}\\
+r(g^{+})\sum_{i=1}^{n+1}\frac{1}{n+1}\sum_{j=1}^{2^{n}}p(g^{+}_{ij}).
\end{split}
\end{equation}
In equation \ref{eq1}, the first term accounts for the case where the network is created using a $same$ step and the next two account for the $addition$ and $deletion$ cases respectively. The first summation in the $deletion$ term is due to the fact that the deleted node could have been any of the \(n+1\) nodes in the previous graph with probability \(1/(n+1)\). The second summation is to account for all possible ways that the deleted node could have been connected to the remaining nodes. \(g^{+}_{ij}\) indicates the $j$th possible way that edges could have been connected to the deleted node which was labeled as \(i\). \(r(g^{+})\) shows \(r(g^{+}_i)\) for all $i$, as they are all equal. 

Notice that the deleted node could have been any of the \(n+1\) nodes in the previous graph with equal probability, and no matter its label, it could have been connected to the existing \(n\) nodes in any manner with its respective Erdős–Rényi probability. Because of this symmetry in the deletion step between different nodes for all \(g\in S_{n+1}\), we can simplify the $deletion$ term in the following manner.
\begin{equation}
\label{deletion}
\begin{split}
&r(g^{+})\sum_{i=1}^{n+1}\frac{1}{n+1}\sum_{j=1}^{2^{n}}p(g^{+}_{ij})\\
&=r(g^{+})(\sum_{j=1}^{2^{n}}p(g^{+}_{(n+1)j}))(\sum_{i=1}^{n+1}\frac{1}{n+1})\\
&=r(g^{+})\sum_{j=1}^{2^{n}}p(g^{+}_{(n+1)j})\\
&=r(g^{+})\sum_{j=1}^{2^{n}}p(g^{+}_{j})
\end{split}
\end{equation}
where the notation in the last equality is changed for simplicity.

By induction, we can write \(p(g^{+}_j)\) in the following manner.
\[p(g^{+}_j)=C^{+}p(g)q^{k^{+}_j}(1-q)^{n-k^{+}_j}\]
and because in \(\sum_{j=1}^{2^{n}}p(g^{+}_{j})\) we are including all possible ways that edges can get connected to node \(n+1\), we can further simplify equation \ref{deletion}.
\begin{equation}
\label{deletion2}
\begin{split}
r(g^{+})\sum_{j=1}^{2^{n}}p(g^{+}_j)&=r(g^{+})C^{+}p(g)\sum_{j=1}^{2^{n}}q^{k^{+}_j}(1-q)^{n-k^{+}_j}\\
&=r(g^{+})C^{+}p(g)
\end{split}
\end{equation}
By inserting equation \ref{deletion2} into equation \ref{eq1} we will have
\[(1-s(g)-r(g^{+})C^{+})p(g)=t(g^\prime)p(g^\prime)q^k(1-q)^{n-1-k}.\]
And therefore we end up with the following equation.
\begin{equation}
\label{P(g)}
\begin{split}
    p(g)&=\frac{t(g^\prime)}{1-s(g)-r(g^{+})C^{+}}q^k(1-q)^{n-1-k}p(g^\prime)\\
    &=Cq^k(1-q)^{n-1-k}p(g^\prime)
\end{split}
\end{equation}
Notice that \(t(g^\prime)\), \(s(g)\), \(r(g^{+})\), and \(C^{+}\) have a constant value for all \(g\in S_n\). Therefore, the value \((t(g^\prime))/(1-s(g)-C^{+})\) can be considered as a constant \(C\) for all graphs in \(S_n\). Our lemma is therefore proved by backward induction.
\end{proof}

We are now equipped to prove the following theorem.
\begin{theorem}
\label{proof ER}
The conditional distribution of graphs in an Erdős–Rényi Network Evolution Chain when conditioned on their number of vertices is equivalent to the same Erdős–Rényi distribution used in the addition step of the NEC.
\end{theorem}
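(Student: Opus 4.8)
The plan is to bootstrap directly from Lemma \ref{dependence}. That lemma already gives, for a single peel-off of the newest node, the relation $p(g) = C\, q^{k}(1-q)^{n-1-k}\, p(g')$ with $C$ constant over $S_n$. Iterating this relation down to the single-node graph should convert the stationary probability of any $g \in S_n$ into the $G(n,q)$ weight of $g$ multiplied by a factor that depends only on $n$. Conditioning on $|V(g)| = n$ then cancels that factor, and after checking that the resulting weights sum to one, the conditional law collapses to exactly the Erdős–Rényi distribution.

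Concretely, I would prove by (forward) induction on $n$ that there is a constant $A_n$, the same for every $g \in S_n$, with
\[ p(g) = A_n\, q^{|E(g)|}(1-q)^{\binom{n}{2}-|E(g)|}. \]
The base case $n=1$ is immediate, since the single-node graph has no edges and $\binom{1}{2}=0$. For the inductive step I would invoke Lemma \ref{dependence} to write $p(g) = C\, q^{k}(1-q)^{n-1-k}\, p(g')$, where $g'$ is $g$ with its newest node removed, $k$ is the number of neighbours of that node, and $C$ is constant over $S_n$, and then substitute the inductive form of $p(g')$ for $g' \in S_{n-1}$. The bookkeeping to verify is that $|E(g)| = |E(g')| + k$ and that the $(1-q)$ exponents combine correctly via $\binom{n-1}{2} + (n-1) = \binom{n}{2}$; together these yield the claimed expression with $A_n = C\,A_{n-1}$, still independent of the particular $g$.

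With this normal form established, the conditional probability becomes
\[ p\bigl(g \mid |V(g)|=n\bigr) = \frac{p(g)}{\sum_{h\in S_n} p(h)} = \frac{q^{|E(g)|}(1-q)^{\binom{n}{2}-|E(g)|}}{\sum_{h\in S_n} q^{|E(h)|}(1-q)^{\binom{n}{2}-|E(h)|}}, \]
where the common factor $A_n$ cancels between numerator and denominator. The closing step is to identify the denominator as the total probability assigned by the $G(n,q)$ model to all labeled graphs on $n$ vertices, which equals $1$ because each of the $\binom{n}{2}$ potential edges is included independently with probability $q$. Hence the conditional law equals $q^{|E(g)|}(1-q)^{\binom{n}{2}-|E(g)|}$, which is precisely the $G(n,q)$ probability of $g$, establishing the theorem.

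I expect the only genuinely delicate point to be the edge-counting identity inside the induction: one must confirm that the per-node neighbour count $k$ accumulated across the whole peeling sequence recovers exactly the edge set of $g$ — each edge counted once, when its higher-labeled endpoint is removed — so that the total $(1-q)$ exponent is exactly $\binom{n}{2}-|E(g)|$ rather than being shifted by the edges already present in $g'$. The remaining ingredients, namely the cancellation of $A_n$ and the fact that the $G(n,q)$ weights over all labeled $n$-vertex graphs sum to one, are routine.
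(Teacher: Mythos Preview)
Your proposal is correct and follows essentially the same route as the paper: both arguments proceed by forward induction on $n$, invoke Lemma~\ref{dependence} to peel off the newest node, and use the identity $\binom{n-1}{2}+(n-1)=\binom{n}{2}$ together with $|E(g)|=|E(g')|+k$ to reconstruct the Erd\H{o}s--R\'enyi weight. The only cosmetic difference is that the paper phrases the inductive step in terms of the ratio $p(g_1)/p(g_2)$ for $g_1,g_2\in S_n$, whereas you carry the explicit normal form $p(g)=A_n\,q^{|E(g)|}(1-q)^{\binom{n}{2}-|E(g)|}$ and cancel $A_n$ at the end; these are equivalent packagings of the same computation.
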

\begin{proof}
We will use induction to prove this theorem. It is easy to see that as there exists only one graph with one node, its conditional distribution matches that of the Erdős–Rényi model. This will act as the basis of our induction.

Now, suppose that the theorem holds for the subset \(\{g\in S||V(g)|<n\}\). For the induction step, we need to prove that the theorem holds for all \(g\in S_n\). We can write
\[p(g||V(g)|=n)=\frac{p(g,|V(g)|=n)}{p(|V(g)|=n)}=\frac{p(g)}{p(|V(g)|=n)}.\]
Therefore, if the probability ratio for all possible pairs of graphs from this subset are equal to the same ratio from the Erdős–Rényi distribution, we can conclude that the theorem holds for all \(g\in  S_n\). For \(g_1,g_2\in S_n\), we can write from lemma \ref{dependence}
\[\frac{p(g_1)}{p(g_2)}=\frac{Cq^{k_1}(1-q)^{(n-1-k_1)}p(g^\prime_1)}{Cq^{k_2}(1-q)^{(n-1-k_2)}p(g^\prime_2)}.\]
let \(E_1\), \(E_2\), \(E^\prime_1\) and \(E^\prime_2\) be the number of edges in \(g_1\), \(g_2\), \(g^\prime_1\) and \(g^\prime_2\) respectively. From induction, we know that
\[\frac{p(g_1^\prime)}{p(g_2^\prime)}=\frac{q^{E^\prime_1}(1-q)^{{n-1\choose 2}-E^\prime_1}}{q^{E^\prime_2}(1-q)^{{n-1\choose 2}-E^\prime_2}},\]
and therefore we can write
\[\frac{p(g_1)}{p(g_2)}=\frac{q^{k_1}(1-q)^{(n-1-k_1)}q^{E^\prime_1}(1-q)^{{n-1\choose 2}-E^\prime_1}}{q^{k_2}(1-q)^{(n-1-k_2)}q^{E^\prime_2}(1-q)^{{n-1\choose 2}-E^\prime_2}}\]
\[=\frac{q^{k_1+E^\prime_1}(1-q)^{{n-1\choose 2}+(n-1)-E^\prime_1-k_1}}{q^{k_2+E^\prime_2}(1-q)^{{n-1\choose 2}+(n-1)-E^\prime_2-k_2}}.\]
We can rewrite the equation in the following format.
\begin{equation}
\label{Chain is ER}
   \frac{p(g_1)}{p(g_2)} =\frac{q^{E_1}(1-q)^{{n\choose 2}-E_1}}{q^{E_2}(1-q)^{{n\choose 2}-E_2}}
\end{equation}
As equation \ref{Chain is ER} holds for all \(g_1,g_2\in S_n\), it proves that the conditional distribution of graphs in \(S_n\) matches that of the Erdős–Rényi model used in the addition step and our proof is complete.
\end{proof}

Now that we have proven that Erdős–Rényi Network Evolution Chains satisfy our desired property, we will move on to examine their other features.

\subsection{Stationary Distribution of the Number of Nodes}

One interesting feature to examine in Network Evolution Chains is the stationary distribution of the number of nodes of the graphs present in the chain. We know that such a distribution exists and is unique, because theorem \ref{ergodic stationary NPC} guarantees it (Notice that the chain containing the number of nodes for each element of an NEC is actually an NPC). The definition of an ERNEC makes the analysis of the stationary distribution of the number of nodes much easier. The following matrix can be defined as the transition probability matrix between different numbers of nodes in the ERNEC.
\begin{equation}
\label{Matrice}
\textbf{\textit{P}}=\begin{bmatrix}
s(1) & t(1)   & 0 & ... & 0&0\\
r(2)&s(2)&t(2)&...&0&0\\
0& r(3)&s(3)&...&0&0\\
\vdots&\vdots&\vdots&...&\vdots&\vdots\\
0&0&0&...&r(n_{max})&s(n_{max})
\end{bmatrix}
\end{equation}
In equation \ref{Matrice}, \(t(i)\), \(s(i)\) and \(r(i)\) show their respective values for any graph in the state space whose number of nodes is equal to \(i\). Let $\pi_i$ be the stationary probability of observing a graph with $i$ nodes. For finding the stationary distribution \(\pi=(\pi_1,\pi_2,...\pi_{n_{max}})\), we need to solve the equation \(\pi=\pi \textbf{\textit{P}}\) or equivalently, \((1-\textbf{\textit{P}})\pi=0\) constrained on \(\sum_{i=1}^{n_{max}}\pi_i=1\). Notice that based on our constraint, if we find \(\pi_i/\pi_1\) for all \(i\), we will have found  \((\pi_1,\pi_2,...\pi_{n_{max}})\). After some substituting \(s(i)\) with \(1-t(i)-r(i)\) in equations present in \((1-\textbf{\textit{P}})\pi=0\), we end up with the following system of equations.
\begin{equation}
\label{series}
    \begin{cases}
        \pi_1t(1)-\pi_2r(2)=0\\
        -\pi_1t(1)+\pi_2(r(2)+t(2))-\pi_3r(3)=0\\
        \vdots\\
        -\pi_{n_{max}-1}t(n_{max}-1)+\pi_{n_{max}}r(n_{max})=0
    \end{cases}
\end{equation}
We now provide the following theorem for calculating \(\frac{\pi_i}{\pi_1}\) for all \(i\).
\begin{theorem}
\label{stationary}
For all \(1<i\leq n_{max}\) we have
\[\frac{\pi_i}{\pi_1}=\frac{\prod_{j=1}^{i-1}t(j)}{\prod_{j=2}^{i}r(j)}.\]
\end{theorem}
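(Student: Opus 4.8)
The plan is to recognize that the transition matrix $\textbf{\textit{P}}$ in \ref{Matrice} is tridiagonal, so the chain on the number of nodes is a birth--death chain, and to exploit the resulting \emph{local balance} relations $\pi_j t(j) = \pi_{j+1} r(j+1)$. Once these are established for every $1 \le j \le n_{max}-1$, the claimed formula follows immediately by telescoping the ratios $\pi_i/\pi_1 = \prod_{j=1}^{i-1} (\pi_{j+1}/\pi_j)$. In other words, rather than solving the linear system \ref{series} directly, I would first reduce each equation to a single two-term identity and then multiply these identities together.

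First I would extract the local balance relations from the system \ref{series} by induction on $j$. The base case is the first equation of \ref{series}, which reads $\pi_1 t(1) - \pi_2 r(2) = 0$ and hence gives $\pi_1 t(1) = \pi_2 r(2)$ directly. For the inductive step, assume $\pi_{j-1} t(j-1) = \pi_j r(j)$ for some $j \ge 2$ and consider the $j$-th interior equation of \ref{series}, namely $-\pi_{j-1} t(j-1) + \pi_j (r(j)+t(j)) - \pi_{j+1} r(j+1) = 0$. Substituting the inductive hypothesis $\pi_{j-1} t(j-1) = \pi_j r(j)$ cancels the $\pi_j r(j)$ contribution and leaves exactly $\pi_j t(j) = \pi_{j+1} r(j+1)$, completing the induction. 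The final equation of \ref{series} is then automatically consistent, serving as a redundancy check rather than an extra constraint.

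With local balance in hand, I would write $\pi_{j+1}/\pi_j = t(j)/r(j+1)$ for each $j$ and telescope:
\[\frac{\pi_i}{\pi_1} = \prod_{j=1}^{i-1}\frac{\pi_{j+1}}{\pi_j} = \prod_{j=1}^{i-1}\frac{t(j)}{r(j+1)} = \frac{\prod_{j=1}^{i-1} t(j)}{\prod_{j=2}^{i} r(j)},\]
where the last step is the reindexing $\prod_{j=1}^{i-1} r(j+1) = \prod_{j=2}^{i} r(j)$, which recovers the statement of Theorem \ref{stationary}.

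I do not expect a serious obstacle here: the argument is a routine induction, and the only points demanding care are the algebraic cancellation in the inductive step (ensuring the substitution is applied to the correct term) and the off-by-one bookkeeping in the telescoping product and its denominator reindexing. A minor conceptual remark worth making is \emph{why} every ratio is well defined: since $t(j) \neq 0$ for $j < n_{max}$ and $r(j) \neq 0$ for $j > 1$ by the definition of an NEC, each factor $t(j)/r(j+1)$ is finite and strictly positive, so the ratios are determined unambiguously and normalizing by $\sum_i \pi_i = 1$ then recovers the full distribution.
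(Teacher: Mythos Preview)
Your proof is correct and follows essentially the same route as the paper: both arguments proceed by induction on the equations in \ref{series}, using the previously established relation to cancel a term in the $j$th interior equation and thereby extract $\pi_{j+1}$ in terms of $\pi_j$. Your presentation packages the inductive output as the local balance identity $\pi_j t(j)=\pi_{j+1} r(j+1)$ and then telescopes, whereas the paper inducts directly on the ratio $\pi_i/\pi_1$, but the underlying manipulation is identical.
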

\begin{proof}
We will prove this theorem by induction. For our base case, we use the first equation in \ref{series} and we have
\[\frac{\pi_2}{\pi_1}=\frac{t(1)}{r(2)},\]
which matches the theorem. Now, suppose that we know that the theorem holds for \(2\leq i \leq k\). For \(k+1\), we use the $k$th equation in \ref{series}. We have
\[-\pi_{k-1}t(k-1)+\pi_k(t(k)+r(k))-\pi_{k+1}r(k+1)=0.\]
If we divide the equation by \(\pi_{k-1}\), we end up with the following equation.
\begin{equation}
\label{frac}
    \frac{\pi_{k+1}}{\pi_{k-1}}=\frac{-t(k-1)+\frac{\pi_k}{\pi_{k-1}}(t(k)+r(k))}{r(k+1)}
\end{equation}
Notice that from the assumption of the induction, we can write
\[\frac{\pi_k}{\pi_{k-1}}=\frac{t(k-1)}{r(k)}.\]
By plugging this into equation \ref{frac} we have
\[ \frac{\pi_{k+1}}{\pi_{k-1}}=\frac{\frac{t(k)t(k-1)}{r(k)}}{r(k+1)}
=\frac{t(k)t(k-1)}{r(k)r(k+1)}.\]
And because we have \(\frac{\pi_{k-1}}{\pi_1}=\frac{\prod_{j=1}^{k-2}t(j)}{\prod_{j=2}^{k-1}r(j)}\) we will have
\[\frac{\pi_{k+1}}{\pi_1}=\frac{\prod_{j=1}^{k}t(j)}{\prod_{j=2}^{k+1}r(j)}.\]
\end{proof}
Based on theorem \ref{stationary}, for finding the stationary distribution of the number of nodes, we can choose \(\pi_{1,temp}=1\), find \(\pi_{i,temp}\) for \(2\leq i \leq n_{max}\) using the theorem and then divide \((\pi_{1,temp},...,\pi_{n_{max},temp})\) by \(\sum_{i=1}^{n_{max}}\pi_{i,temp}\). It is interesting to observe that the distribution does not depend on the values set for \(s(i)\). Therefore, for reaching our desired distribution on the number of nodes, we can set \(t(i)\)'s and \(r(i)\)'s accordingly. Theorem \ref{stationary} also provides us with the following corollary.
\begin{corollary}
For all \(1\leq l<m\leq n_{max}\) we have
\[\frac{\pi_m}{\pi_l}=\frac{\prod_{i=l}^{m-1}t(i)}{\prod_{i=l+1}^{m}r(i)}.\]
\end{corollary}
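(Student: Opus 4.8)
The plan is to derive this directly from Theorem \ref{stationary}, since that theorem already expresses every stationary probability relative to $\pi_1$. The key observation is that for any $1 < l < m \leq n_{max}$ we may write $\frac{\pi_m}{\pi_l} = \frac{\pi_m/\pi_1}{\pi_l/\pi_1}$, and Theorem \ref{stationary} supplies a closed form for each of the two factors. Substituting both expressions turns the problem into the manipulation of a quotient of products of $t(j)$ over products of $r(j)$, with no probabilistic content remaining.

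The second step is to cancel the common factors. The numerator contributes $\prod_{j=1}^{m-1} t(j)$ divided by $\prod_{j=1}^{l-1} t(j)$, which leaves exactly $\prod_{j=l}^{m-1} t(j)$; the denominator contributes $\prod_{j=2}^{m} r(j)$ divided by $\prod_{j=2}^{l} r(j)$, which leaves exactly $\prod_{j=l+1}^{m} r(j)$. Renaming the summation index $j$ as $i$ then yields the claimed identity
\[\frac{\pi_m}{\pi_l}=\frac{\prod_{i=l}^{m-1}t(i)}{\prod_{i=l+1}^{m}r(i)}.\]

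Finally I would dispose of the boundary case $l=1$. Here the corollary reduces to $\frac{\pi_m}{\pi_1} = \frac{\prod_{i=1}^{m-1} t(i)}{\prod_{i=2}^{m} r(i)}$, which is precisely the statement of Theorem \ref{stationary} with the free index set equal to $m$, so nothing further needs to be proved. I expect no genuine obstacle in this argument, as the whole claim is essentially a telescoping of the ratio already established. The only points requiring care are the index bookkeeping in the cancellation—tracking which endpoint of each product survives—and adopting the usual empty-product convention so that the general formula remains valid and consistent at the $l=1$ endpoint.
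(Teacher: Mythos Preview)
Your proposal is correct and matches the paper's own treatment: the paper presents this corollary as an immediate consequence of Theorem~\ref{stationary} without further proof, and your argument---dividing the two closed forms $\pi_m/\pi_1$ and $\pi_l/\pi_1$ and cancelling common factors---is exactly the computation that justifies this. The index bookkeeping and the $l=1$ boundary case are handled correctly.
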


It is notable to mention that the model used in this section and the proved results can be seen as an analogue of the model and the results for birth-death processes \cite{karlin1957differential}.

\subsection{Probability Distribution of Individual Graphs}
We have proved that the conditional distribution of graphs in an Erdős–Rényi Network Evolution Chain matches the Erdős–Rényi model. We have also provided a method for calculating the stationary distribution of the number of nodes. Having both of these values for all \(g\in S\), we can calculate the probability of observing each individual graph. Consider \(g\in S\) with \(N\) vertices and \(E\) edges. For $N>1$ we can write
\begin{equation}
\label{gdist}
    p(g) = p(g||V(g)|=N)p(|V(g)|=N) = \pi_Nq^E(1-q)^{{N\choose 2}-E},
\end{equation}
where \(q\) is the parameter of the desired Erdős–Rényi model. For $N=1$ we simply have
\begin{equation}
    p(g) = p(g||V(g)|=1)p(|V(g)|=1)=\pi_1.
\end{equation}

Notice that as \(p(|V(g)|=N)\) does not depend on \(s(i)\), \(p(g)\) does not depend on \(s(i)\) either. This gives us the freedom to choose \(s(i)\) to match our desired continuous time model as explained in section \ref{Continous}.

\subsection{Entropy Rate}
For calculating the entropy rate of Erdős–Rényi Network Evolution Chains, we can use equation \ref{entropy rate calculation}. We first start by simplifying equation \ref{H(T)} for an ERNEC.
\begin{equation}
\label{insert1}
        H(T|G_{n-1})=-\sum_{i=1}^{n_{max}}\pi_i(t(i)\log t(i)+r(i)\log r(i)+s(i)\log s(i))
\end{equation}
We then move on to simplify equation \ref{16}. Notice that for Erdős–Rényi Network Evolution Chains we have \(H(G_n|G_{n-1}=g,T=deletion)=\log |V(g)|\). This is because in case of deletion, the deleted node is chosen uniformly among all the nodes of the graph. Therefore, we can write

\begin{equation}
\label{insert2}
    H(G_n|G_{n-1},T)=-\sum_{i=1}^{n_{max}-1}\pi_it(i)\sum_{j=0}^{i}{i\choose j}q^j(1-q)^{i-j}\log(q^j(1-q)^{i-j})+\sum_{i=2}^{n_{max}}\pi_ir(i)\log i.
\end{equation}

We can now calculate the entropy rate of ERNEC by inserting the results from equations \ref{insert1} and \ref{insert2} into equation \ref{entropy rate calculation}.

\begin{multline}
    \label{entropy rate}
        H(G)=-\sum_{i=1}^{n_{max}}\pi_i(t(i)\log t(i)+r(i)\log r(i)+s(i)\log s(i))\\
        -\sum_{i=1}^{n_{max}-1}\pi_it(i)\sum_{j=0}^{i}{i\choose j}q^j(1-q)^{i-j}\log(q^j(1-q)^{i-j})+\sum_{i=2}^{n_{max}}\pi_ir(i)\log i
\end{multline}

\subsection{Functions of Erdős–Rényi Network Evolution Chains}
\label{ER HMM}

In section \ref{Functions of NEC}, we defined a method that can be used for analyzing the dynamics of a function of the network using a Hidden Markov Model. Notice that the number of states in the proposed Hidden Markov Model equals the number of graphs that can be made with a maximum of \(n_{max}\) nodes. This number grows with \(O(e^{n_{max}^2})\). To reduce the number of states in the HMM proposed in section \ref{Functions of NEC}, we propose a new HMM for an ERNEC, which significantly reduces the number of states.

Consider \(G=\{G_1,G_2,...,G_n\}\) to be an Erdős–Rényi Network Evolution Chain. Also, consider the chain \(X=\{X_1,X_2,...,X_n\}\) to be a chain in which \(X_i\) shows the number of nodes in \(G_i\). The new chain, \(\{X_i\}\), is also a Markov chain. This is because the state of \(X_i\) only depends on \(X_{i-1}\) and its respective node transition probabilities. Specifically, the transition probabilities for \(\{X_i\}\) can be written as the following.

\[
    p(X_i|X_{i-1})= 
\begin{cases}
    t(X_{i-1}),& \text{if } X_{i}=X_{i-1}+1\\
    r(X_{i-1}),& \text{if } X_{i}=X_{i-1}-1\\
    s(X_{i-1}),& \text{if } X_{i}=X_{i-1}\\
    0, &\text{otherwise}\\
\end{cases}
\]

Now, consider another chain, \(Y=\{Y_1,Y_2,...,Y_n\}\). In this new chain, we have \(Y_i=f(G_i)\) for a desired function \(f\). We can build our HMM by using \(\{X_i\}\) as the original Markov chain and \(\{Y_i\}\) as the observed sequence. This way, emission probabilities can be written as

\[p(Y_i|X_i)=\frac{\sum_{g\in S_{X_i}}I(f(g)=Y_i)}{2^{{X_i \choose 2}}},\]
where \(I\) is a function which returns \(1\) if the condition inside it is satisfied and \(0\) otherwise.

This way, the number of states of the Markov chain will increase linearly with \(n_{max}\) and not exponentially. This will make the implementation of the HMM computationally more feasible. The transition probabilities are easily calculated. However, for calculating the emission probabilities, we still need to calculate \(f(g)\) for all \(g \in S\), where \(|S|\) grows exponentially with \(n_{max}\) as mentioned earlier. To overcome this issue, we estimate the emission probabilities using an HMM estimation algorithm such as the Baum–Welch algorithm \cite{rabiner1989tutorial} and do not calculate them directly. In other words, we take the following steps:

\begin{enumerate}
    \item Build our random ERNEC.
    \item Build our NPC from our ERNEC.
    \item Create the \(n_{max}\) by \(n_{max}\) transition probability matrix for transition between different numbers of nodes in our ERNEC (Equation \ref{Matrice}).
    \item Use an HMM estimation algorithm to estimate the emission probabilities of the HMM \((X_i,Y_i)\).
    \item Run the forward algorithm to calculate the probability of the observed sequence.
    \item Estimate the entropy rate of the NPC using the AEP.

\end{enumerate}

The above algorithm will help us estimate the probability of an observed sequence of an NPC in an efficient manner. We can first run the described algorithm on several instances created from the same NPC to have an estimate of its entropy rate and then use that value to indicate if a sequence is typical or not.

\section{Simulation}

For observing the results of our model, we simulated Erdős–Rényi Network Evolution Chains using a MATLAB program. We will show that the results of the simulation agree with the theorems stated in this paper.

\subsection{Overview of the Code}

The maximum number of nodes in the graph (\(n_{max}\)), the Erdős–Rényi parameter, and the length of the random chain to be generated are specified by the user. $t(i)$, $r(i)$, and $s(i)$ are chosen randomly in the following manner. For each \(i\) (\(1<i<n_{max}\)), two points are chosen uniformly in \((0,1)\). These two points divide \([0,1]\) into three areas which will provide us with three positive numbers whose sum equals one. We assign the first one to \(t(i)\), the second one to \(r(i)\) and the third one to \(s(i)\). For \(i=1\), we divide \([0,1]\) into two sections and assign the first one to \(t(1)\) and the second one to \(s(1)\). For \(i=n_{max}\), we divide \([0,1]\) into two sections and assign the first one to \(r(n_{max})\) and the second one to \(s(n_{max})\). After these parameters are set, the chain of graphs is randomly created in the manner described in definition \ref{ERNEC Definition}. Please note that in the following simulations, the value for $n_max$ is deliberately chosen to be small for simplicity in presenting the results.

\subsection{Stationary Distribution}

In this section, we compare the stationary distribution of the number of nodes calculated from theorem \ref{stationary} and the one observed in the simulated chain for a number of random chains.

Table \ref{tab1} and table \ref{tab2} clearly show that the observed values for the stationary distribution of nodes matches their corresponding calculated value. The slight difference that can be seen in both tables between the corresponding calculated and observed values can be attributed to the finite length of the chain. Stationarity of the process guarantees that these values tend to be equal to each other as the length of the chain tends to infinity.

\begin{table}[h!]
\caption{\textbf{Simulation results for a random ERNEC: stationary distributions}}
    \centering
    \begin{center}
    \begin{tabular}{ |c||c|c|c|c|c|  }
     \hline
     \multicolumn{6}{|c|}{ERNEC with \(n_{max}=5\), \(q=0.7\) and length\(=100000\)} \\
     \hline
     i&1&2&3&4&5\\
     \hline
     Calculated \(\pi_i\)   & 0.4589    &0.2717&   0.2274&0.0409&0.0011\\
     Observed \(\pi_i\)&   0.4582  & 0.2744   &0.2251&0.0411&0.0013\\
     \hline
    \end{tabular}
    \end{center}
    
    \label{tab1}
\end{table}

\begin{table}[h!]
\caption{\textbf{Simulation results for a random ERNEC: stationary distributions}}
    \centering
    \begin{center}
    \begin{tabular}{ |c||c|c|c|c|  }
     \hline
     \multicolumn{5}{|c|}{ERNEC with \(n_{max}=8\), \(q=0.7\), and length\(=100000\)} \\
     \hline
     i&1&2&3&4\\
     \hline
    Calculated \(\pi_i\)   & 0.0486    &0.0934&   0.1405&0.2743\\
     Observed \(\pi_i\)& 0.0494    &0.0937&   0.1410&0.2732\\
    \hline
     i&5&6&7&8\\
     \hline
    Calculated \(\pi_i\) &0.2564&0.1780&0.0052&0.0035\\
    Observed \(\pi_i\)&0.2547&0.1772&0.0063&0.0047\\
     \hline
    \end{tabular}
    \end{center}
    
    \label{tab2}
\end{table}

\subsection{Stationary Distribution of Graphs}

In this section, we want to observe the most important property of ERNEC; the fact that the stationary distribution of graphs conditioned on their number of nodes matches the Erdős–Rényi distribution. To this end, an ERNEC is defined with an \(n_{max}\) of 3 and Erdős–Rényi parameter of 0.7. An instance of length 1,000,000 is then created from it. We counted the occurrence of 3 specific graphs with 3 nodes and divided those numbers by the number of occurrence of all of the graphs with 3 nodes. This provided us with an estimation of the graphs' conditioned stationary distribution. We also calculated their expected probability of occurrence based on the \(G(3,0.7)\) model. The results can be seen in table \ref{Stationary Simulation}. It can be observed that the estimated probabilities are very close to their expected ER probabilities and therefore the simulation backs up theorem \ref{proof ER}. The small difference between the observed and the calculated probabilities can again be attributed to the finite length of the chain.

\begin{table}[h]
 \caption{\textbf{Stationary Distribution of Graphs}\label{Stationary Simulation}}
  \centering

  \begin{tabular}{ | c | c | c | }
    \hline
    \multicolumn{3}{|c|}{ERNEC with \(n_{max}=3\), \(q=0.7\), and length\(=1000000\)} \\
    \hline
    Graph & Estimated Probability & ER Probability \\ \hline
    \begin{minipage}{.3\textwidth}
      \includegraphics[width=\linewidth]{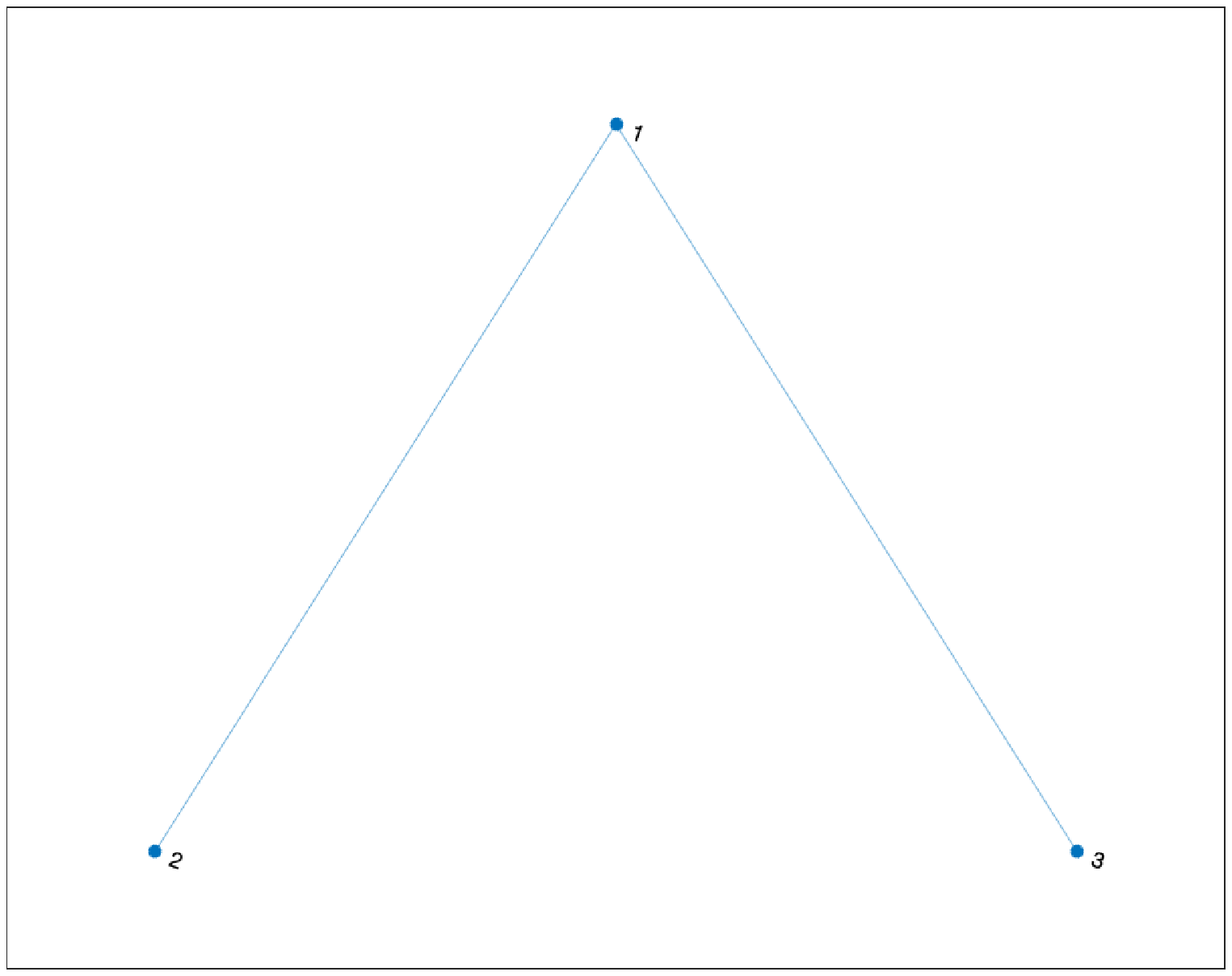}
    \end{minipage}
    &
    0.1453
    & 
    0.1470
    \\ \hline
    \begin{minipage}{.3\textwidth}
      \includegraphics[width=\linewidth]{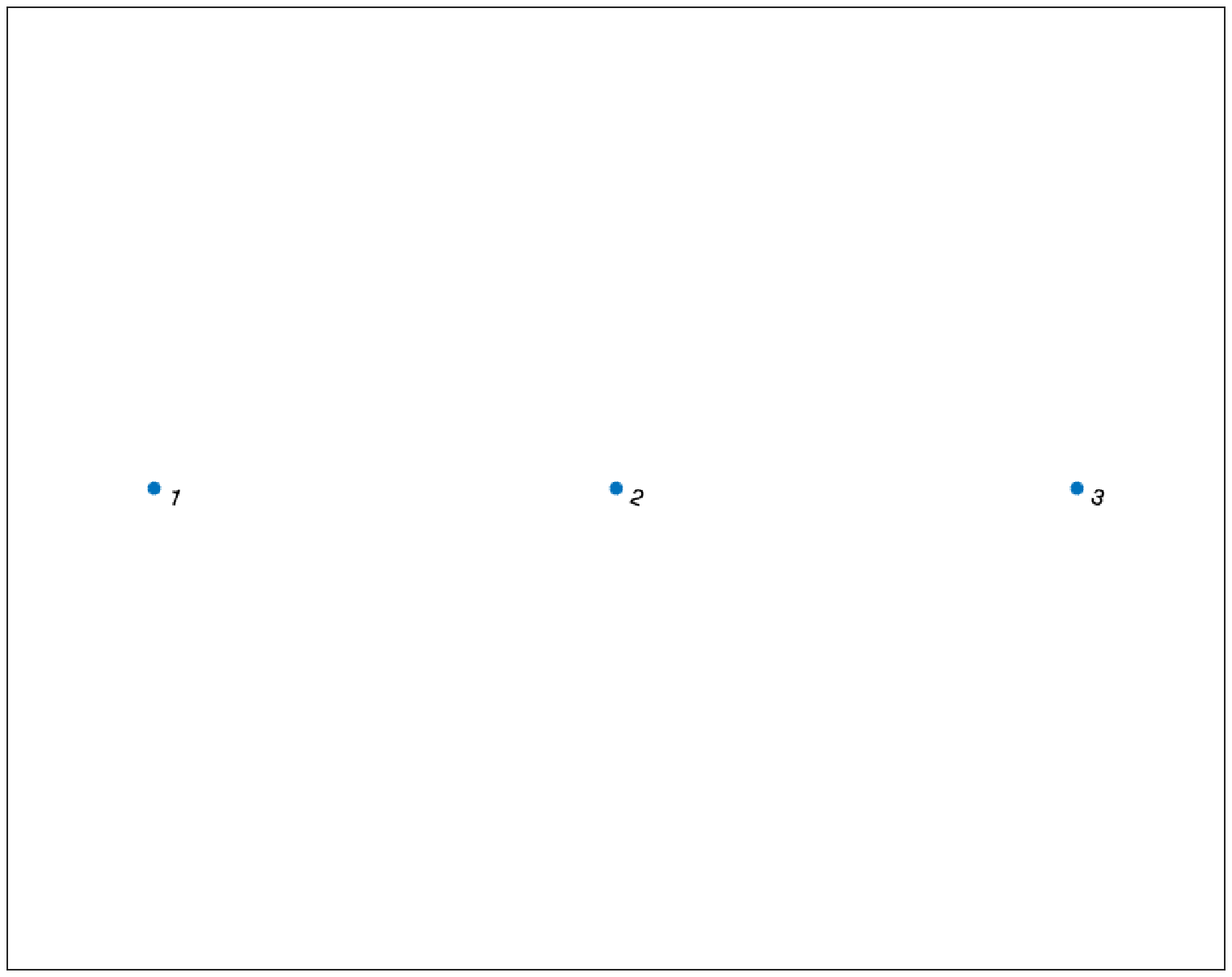}
    \end{minipage}
    &
    0.0263
    & 
    0.0270
    \\ \hline
    \begin{minipage}{.3\textwidth}
      \includegraphics[width=\linewidth]{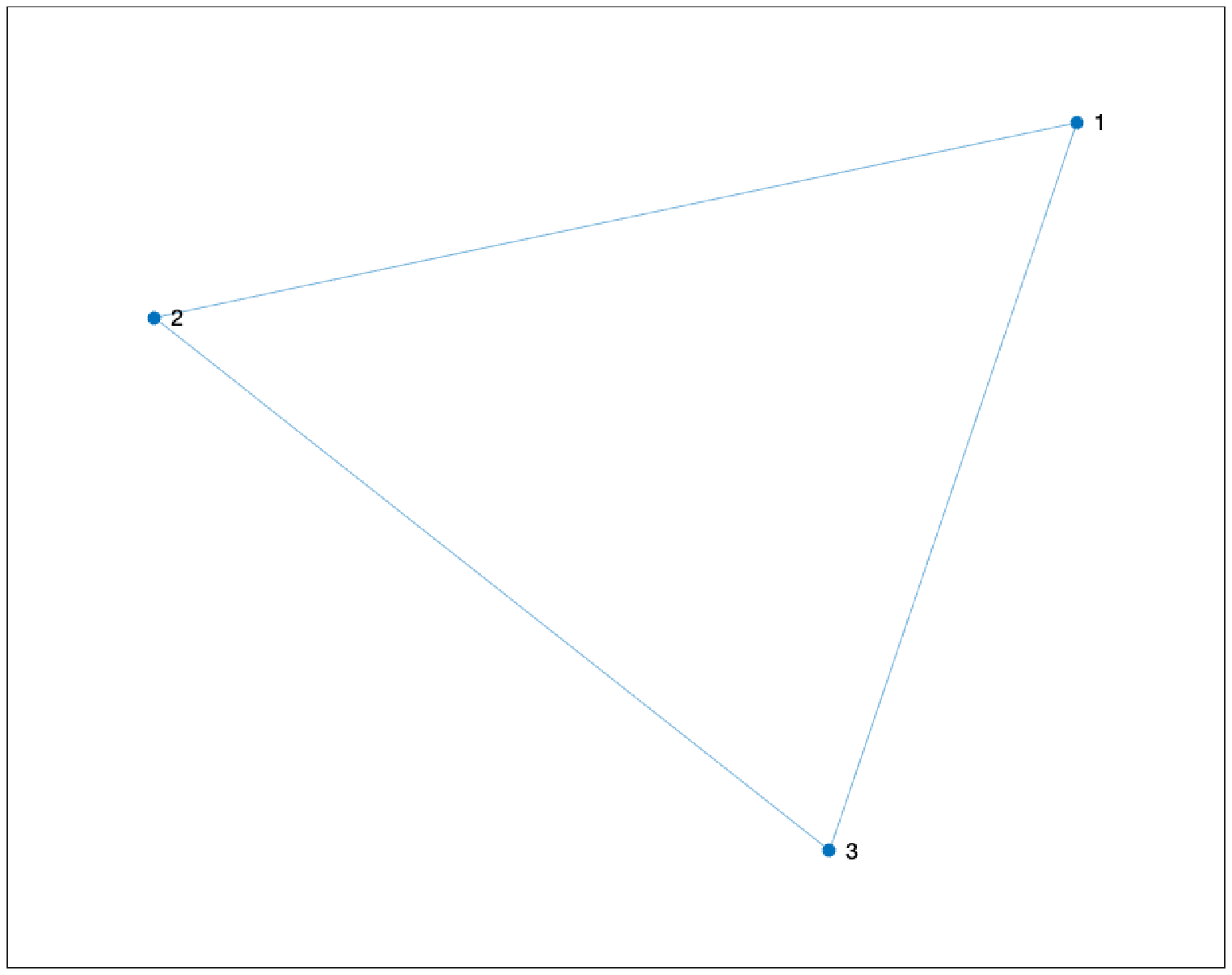}
    \end{minipage}
    &
    0.3372
    & 
    0.3430
    \\ \hline
  \end{tabular}

\end{table}

\subsection{Entropy Rate}

One of the ultimate goals of the concept of Network Evolution Chains is to calculate the entropy rate of the process and be able to define typical sequences of the network. In this section, we examine the properties of entropy rate of ERNEC using two simulations.

For the first simulation, an ERNEC was first defined. Afterwards, its entropy rate was calculated using equation \ref{entropy rate}. Then, five chains of length 200,000 were created from the ERNEC. For each chain, we calculated \(-\frac{1}{n}\log_2p(G_1,...,G_n)\) and then compared it to the calculated entropy rate. The results are shown in table \ref{tab3}.

\begin{table}[h!]
\caption{\textbf{Simulation results for a random ERNEC: Entropy Rate}}
    \centering
    \begin{center}

    \begin{tabular}{ |c|c|c|  }
     \hline
     \multicolumn{3}{|c|}{ERNEC with \(n_{max}=5\), \(q=0.7\), and length\(=200000\)} \\
     \hline
     \multicolumn{2}{|c|}{Calculated Entropy Rate for ERNEC} & 2.6876 b \\
     \hline
     \hline
     instance&\(-\frac{1}{n}\log_2p(G_1,...,G_{50000})\)&\(|H(G)+\frac{1}{n}\log_2p(G_1,...,G_{50000})|\)\\
     \hline
    1&2.7218&0.0342\\
    \hline
    2&2.7180&0.0304\\
    \hline
    3&2.7242&0.0366\\
    \hline
    4&2.7101&0.0225\\
    \hline
    5&2.7082&0.0206\\
    \hline
    \end{tabular}
    
    \end{center}
    
    \label{tab3}
\end{table}

The calculated entropy rate measures the amount of uncertainty we expect when the network goes from one step to the next measure in bits. We expect most of the chains that are randomly created using the model to nearly be a typical sequence. We know that a typical sequence is a sequence whose observed mean entropy is near the actual entropy. It can be seen in table \ref{tab3} that the observed mean entropy for all of the instances is near the actual entropy rate and therefore those instances quite represent some of the typical sequences of length 200000 of graphs that are created using the specified ERNEC.

For the second simulation, we first defined an ERNEC with \(n_{max}=5\) and an Erdős–Rényi parameter of 0.7. Then, we created four instances of length 5000 from it. Fig~\ref{fig2} illustrates the results of this simulation. In each of the subfigures, the estimated entropy rate is plotted as a function of the number of the samples used from the chain to estimate the entropy rate using the AEP. The AEP states that as the length of the chain tends towards infinity, it is expected for the estimated entropy to tend towards the actual entropy rate. This result can be observed for all subfigures in fig~\ref{fig2}, except for \ref{fig2}c. A gap can be observed between the calculated and estimated entropy rate of subfigure \ref{fig2}c. This gap shows that based on the value chosen for $\epsilon$ in definition \ref{Typical Set}, there exist typical sets in which the respective sequences of all subfigures in figure fig~\ref{fig2} are considered to be in the typical set, except for the respective sequence of subfigure \ref{fig2}c. In other words, the respective sequence of subfigure \ref{fig2}c is not as typical a sequence as the other three sequences.

\begin{figure}[!ht]
        \centering
        \includegraphics[width=0.7\columnwidth]{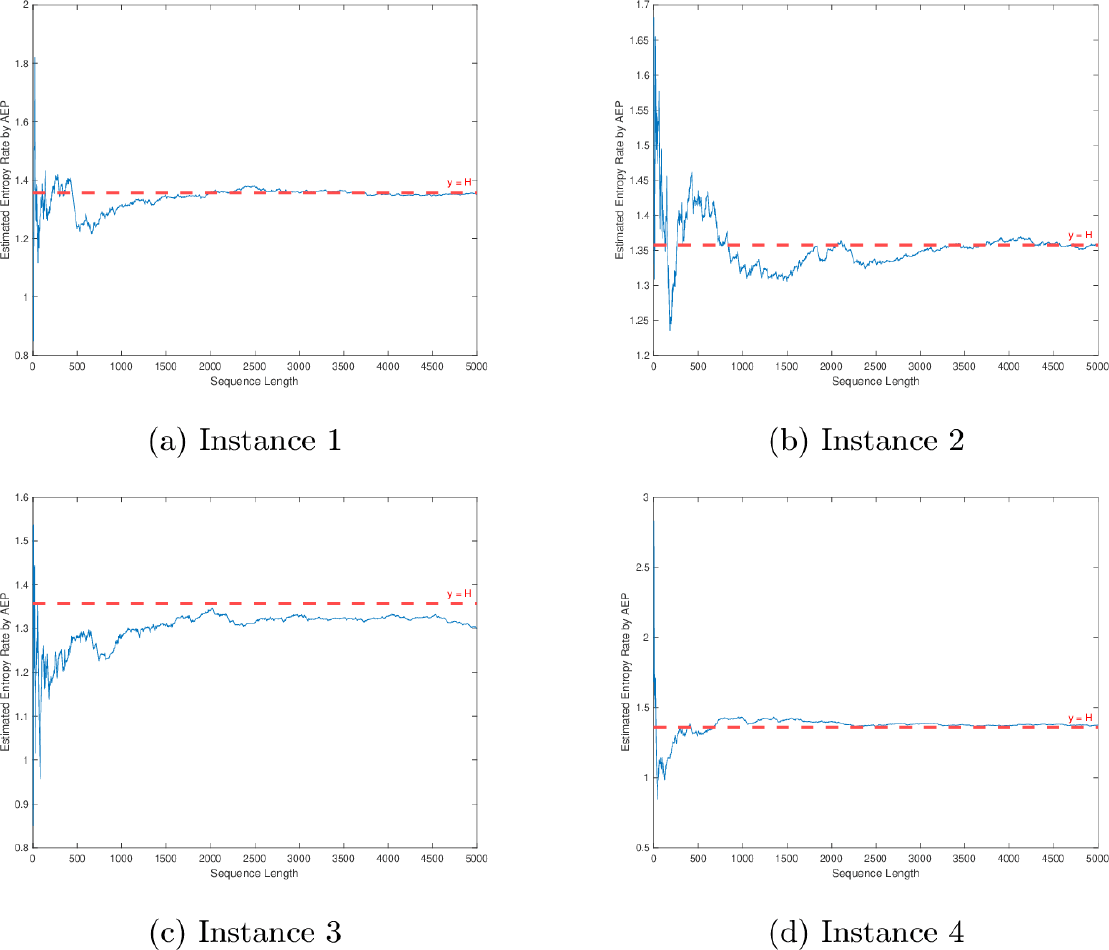}
        \caption{\textbf{The estimated entropy rate for four instances of the same ERNEC}}
        \label{fig2}
\end{figure}

\subsection{Number of Triangles in a Network}

In this section, in addition to simulating an ERNEC using the described method, we also simulate a Network Property Chain. For this simulation, we are interested in the number of triangles formed in the network. Notice that using graph theory, if \(A\) is the adjacency matrix of a network, the number of triangles in the network can be calculated using the following formula:

\[\text{Number of triangles in the network} = \operatorname{Trace}{A^3}/6\]

First, we defined our ERNEC and then built its respective NPC by counting the number of triangles in each network. Afterwards, we formed the HMM defined in section \ref{ER HMM} and then estimated the transition and emission probabilities by having the NPC and the respective states. For this matter, the function \textbf{hmmestimate} in MATLAB was used, which uses a maximum likelihood estimation to estimate the parameters of an HMM \cite{rabiner1989tutorial}. Having the transition and emission probabilities, the forward algorithm calculates the probability of the observed sequence. We can then use the probability of the observed sequence to estimate the entropy rate of the Network Property Chain by the AEP. We chose \(n_{max}=3\). This way, the number of triangles in the network is either zero or one. The Erdős–Rényi probability was set as 0.8 and four instances of length 100000 were created from the defined ERNEC. For each of these four instances, we created its NPC by counting the number of triangles in each graph. We kept the last 1500 samples of each NPC for calculating its entropy rate using the AEP. The results of this simulation are plotted in fig~\ref{fig3}. We have plotted the estimated entropy rate as a function of the number of NPC samples used in equation \ref{AEP to NPC} for estimating the entropy rate. We can observe that, for all of the four plots, the entropy rate is tending towards the same value. Among these four instances, we can see that for fig~\ref{fig3}c the convergence is slower than the others. This can be interpreted as the NPC of fig~\ref{fig3}c is not as typical a sequence as the others. For all of the plots in fig~\ref{fig3}, we can observe that the estimated entropy rate for the NPC is tending towards a value between 0.6 and 0.7. This means that for this specific ERNEC, knowing the number of triangles at time \(n\), there is around 0.7 bits of uncertainty about the number of triangles at time \(n+1\). It also means that the average uncertainty about each symbol in our NPC is around 0.7 bits. Notice that because we only have 2 possible symbols in this NPC, the maximum amount of uncertainty is 1 bit per symbol. However, we can see that having information about the network has reduced this uncertainty to around 0.7 bits.

\begin{figure}[!ht]
        \centering
        \includegraphics[width = 0.7\columnwidth]{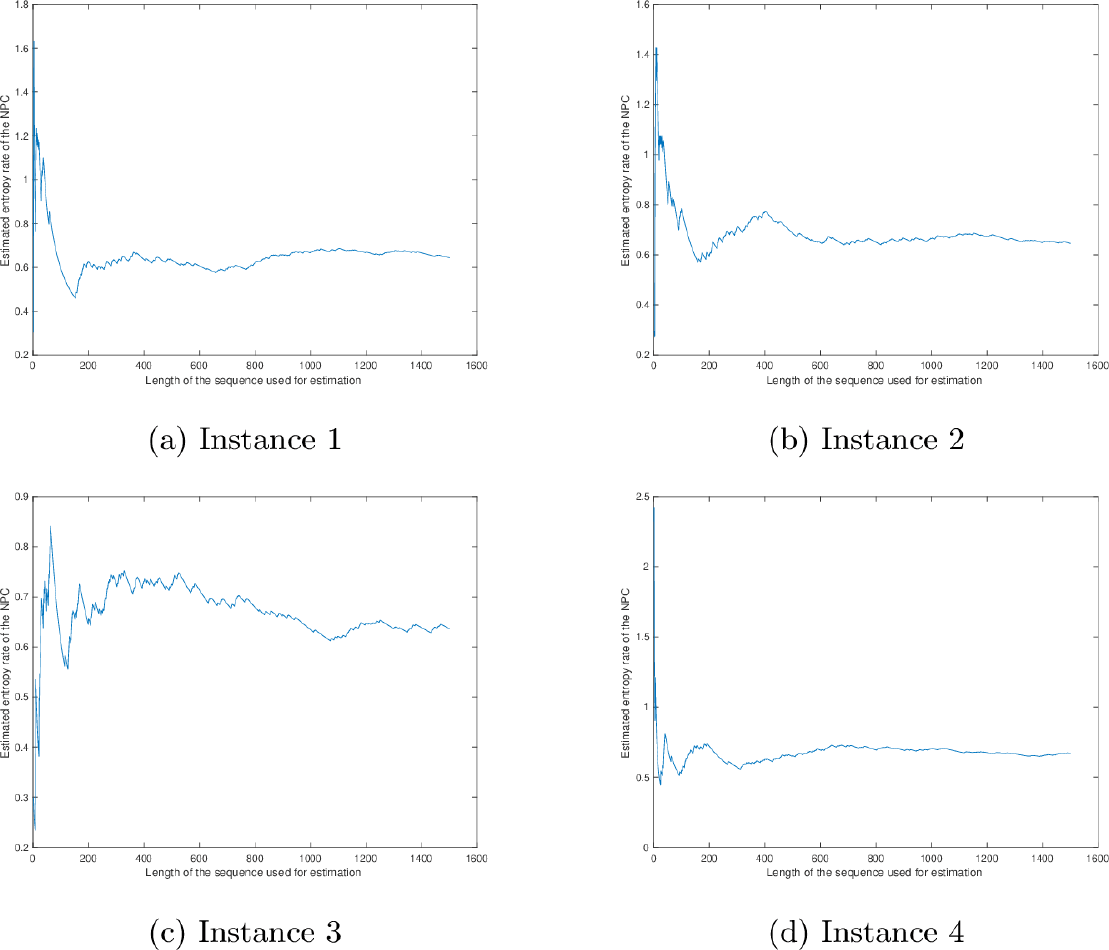}
        \caption{\textbf{The estimated entropy rate for the number of triangles in an NEC}}
        \label{fig3}
\end{figure}

\section*{Conclusion}

Network Evolution Chains were introduced as models which can be used to simulate the evolution of networks over time, while also satisfying the conditions of a stochastic process to which the Shannon-McMillan-Breiman theorem can be applied. This leads to the definition of typical sequences of an NEC, which will act as the basis of compressing the evolution of a network. This will help us in applications in which the state of a network needs to be transmitted in an efficient manner, when the previous states are assumed to be known. Knowing the entropy rate of the process can lead to better compression, transmission and even prediction of the future states. Sometimes, we are interested in certain properties of the network, such as the occurrence of a certain structure. To this end, Network Property Chains were introduced and analysed as processes that only focus on our desired properties. Additionally, two methods were introduced for simulating the concept of the NEC in the continuous time domain, one of which is based on continuous time Markov chains. Furthermore, Erdős–Rényi Network Evolution Chains were defined as a subset of NECs in a way that their conditional distribution matches that of the Erdős–Rényi random graph generation model and can be used in scenarios in which the Erdős–Rényi model is used for simulating networks.

\section*{Acknowledgments}
This work was supported by EPSRC grant number EP/T02612X/1. We also thank Moogsoft inc. for their support in this research project.

%
%
%

\bibliographystyle{comnet}
\bibliography{biblio}

\begin{thebibliography}{00}

\bibitem{albert2002statistical}
Albert, R. {\&} Barab{\'a}si, A.-L. (2002)  Statistical mechanics of complex
  networks. {\em Reviews of modern physics}, \textbf{74}(1), 47.

\bibitem{doi:10.1080/0022250X.1996.9990179}
Banks, D.~L. {\&} Carley, K.~M. (1996)  Models for network evolution. {\em The
  Journal of Mathematical Sociology}, \textbf{21}(1-2), 173--196.

\bibitem{birkhoff1931proof}
Birkhoff, G.~D. (1931)  Proof of the ergodic theorem. {\em Proceedings of the
  National Academy of Sciences}, \textbf{17}(12), 656--660.

\bibitem{10.5555/1146355}
Cover, T.~M. {\&} Thomas, J.~A. (2006) {\em Elements of Information Theory
  (Wiley Series in Telecommunications and Signal Processing)}.
Wiley-Interscience, USA.

\bibitem{dorogovtsev2002evolution}
Dorogovtsev, S.~N. {\&} Mendes, J.~F. (2002)  Evolution of networks. {\em
  Advances in physics}, \textbf{51}(4), 1079--1187.

\bibitem{erdHos1960evolution}
Erd{\H{o}}s, P. {\&} R{\'e}nyi, A. (1960)  On the evolution of random graphs.
  {\em Publ. Math. Inst. Hung. Acad. Sci}, \textbf{5}(1), 17--60.

\bibitem{gilbert1959}
Gilbert, E.~N. (1959)  Random Graphs. {\em Ann. Math. Statist.},
  \textbf{30}(4), 1141--1144.

\bibitem{holmgren2006using}
Holmgren, {\AA}.~J. (2006)  Using graph models to analyze the vulnerability of
  electric power networks. {\em Risk analysis}, \textbf{26}(4), 955--969.

\bibitem{karlin1957differential}
Karlin, S. {\&} McGregor, J.~L. (1957)  The differential equations of
  birth-and-death processes, and the Stieltjes moment problem. {\em
  Transactions of the American Mathematical Society}, \textbf{85}(2), 489--546.

\bibitem{pakes1969some}
Pakes, A.~G. (1969)  Some conditions for ergodicity and recurrence of Markov
  chains. {\em Operations Research}, \textbf{17}(6), 1058--1061.

\bibitem{rabiner1986introduction}
Rabiner, L. {\&} Juang, B. (1986)  An introduction to hidden Markov models.
  {\em IEEE ASSP Magazine}, \textbf{3}(1), 4--16.

\bibitem{rabiner1989tutorial}
Rabiner, L.~R. (1989)  A tutorial on hidden Markov models and selected
  applications in speech recognition. {\em Proceedings of the IEEE},
  \textbf{77}(2), 257--286.

\bibitem{ROBINS2007173}
Robins, G., Pattison, P., Kalish, Y. {\&} Lusher, D. (2007)  An introduction to
  exponential random graph (p*) models for social networks. {\em Social
  Networks}, \textbf{29}(2), 173--191.
Special Section: Advances in Exponential Random Graph (p*) Models.

\bibitem{ross1996stochastic}
Ross, S.~M., Kelly, J.~J., Sullivan, R.~J., Perry, W.~J., Mercer, D., Davis,
  R.~M., Washburn, T.~D., Sager, E.~V., Boyce, J.~B. {\&} Bristow, V.~L. (1996)
  {\em Stochastic processes}, volume~2.
Wiley New York.

\bibitem{10.1093/bioinformatics/btm370}
Saul, Z.~M. {\&} Filkov, V. (2007)  {Exploring biological network structure
  using exponential random graph models}. {\em Bioinformatics},
  \textbf{23}(19), 2604--2611.

\bibitem{TOIVONEN2009240}
Toivonen, R., Kovanen, L., Kivelä, M., Onnela, J.-P., Saramäki, J. {\&}
  Kaski, K. (2009)  A comparative study of social network models: Network
  evolution models and nodal attribute models. {\em Social Networks},
  \textbf{31}(4), 240--254.

\bibitem{zhao2011entropy}
Zhao, K., Halu, A., Severini, S. {\&} Bianconi, G. (2011)  Entropy rate of
  nonequilibrium growing networks. {\em Physical Review E}, \textbf{84}(6),
  066113.

\end{thebibliography}


\begin{thebibliography}{00}

\bibitem{Ahn1983}
Alexander, J.~C. {\&} Auchmuty, G. (1983)  Iterative methods for linear
  complementarity problems with upperbounds and lowerbounds. {\em Mathematical
  Programming}, \textbf{26}, 265.

\bibitem{Chen1995}
Bates, P.~W., Fife, P.~C., Ren, X. {\&} Wang, X. (1995)  A continuation method
  for monotone variational inequalities. {\em Mathematical Programming},
  \textbf{69}, 237--253.

\bibitem{Chen1993}
Britton, N.~F. (1993)  A non-interior-point continuation method for linear
  complementarity problems. {\em SIAM Journal on Matrix Analysis and
  Applications}, \textbf{14}, 1168--1190.

\bibitem{Chen2001}
Duehring, D. {\&} Huang, W. (2001)  Superlinear noninterior one-step
  continuation method for monotone \emph{ LCP} in the absence of strict
  complementarity. {\em Journal of optimization theory and applications},
  \textbf{108}, 317--332.

\bibitem{Chen2003}
Dunbar, S. (2003)  Non-interior continuous methods for solving semidefinite
  complementarity problems. {\em Mathematical Programming}, \textbf{95},
  431--474.

\bibitem{Murty1988}
Hale, J.~K. {\&} Lunel, S. M.~V. (1993) {\em Introduction to Functional
  Differential Equations}.
New York: Springer-Verlag.

\bibitem{Hock1981}
Hassard, B.~D., Kazarinoff, N.~D. {\&} Wan, Y.-H. (1981) {\em Theorey and
  Applications of Hopf Bifurcation}.
Cambridge: Cambridge University Press.

\bibitem{Clarke1983}
Shigesada, N. {\&} Kawasaki, K. (1997) {\em Biological invasions: Theory and
  practice}.
Oxford: Oxford University Press.

\end{thebibliography}

\end{document}